\newcommand{\keywords}[1]{\par\addvspace\baselineskip
\noindent\keywordname\enspace\ignorespaces#1}
\def\>{\ensuremath{\rangle}}
\def\<{\ensuremath{\langle}}
\def\h{\ensuremath{\mathcal{H}}}
\newtheorem{thm}{Theorem}
\newtheorem{cor}{Corollary}
\newtheorem{lem}{Lemma}
\newtheorem{defn}{Definition}
\newcommand{\ket}[1]{|#1\rangle}
\newcommand{\ip}[2]{\langle #1|#2\rangle}
\newcommand{\op}[2]{|#1\rangle \langle #2|}
\newcommand{\tr}{{\rm tr}}
\newcommand {\spa } {{\rm span}}
\newcommand {\supp } {{\rm supp}}
\newcommand {\E } {{\mathcal{E}}}
\newcommand {\F } {{\mathcal{F}}}
\newcommand{\hs}{\mathcal{H}}
\begin{document}

\mainmatter

\title{Reachability and Termination Analysis of Concurrent Quantum Programs}

\titlerunning{Concurrent Quantum Programs}

\author{Nengkun Yu\and Mingsheng Ying}
\authorrunning{N. K. Yu and M. S. Ying}

\institute{Tsinghua University, China\\ University of Technology,
Sydney, Australia\\
\mailsa
}

\toctitle{Concurrent Quantum Programs}
\tocauthor{Yu and Ying}
\maketitle
\begin{abstract}
We introduce a Markov chain model of concurrent quantum programs. This model is a quantum generalization of Hart, Sharir and Pnueli's probabilistic concurrent programs. Some characterizations of the reachable space, uniformly repeatedly reachable space and termination of a concurrent quantum program are derived by the analysis of their mathematical structures. Based on these characterizations, algorithms for computing the reachable space and uniformly repeatedly reachable space and for deciding the termination are given.
\keywords{Quantum computation, concurrent programs, reachability, termination}
\end{abstract}

\section{Introduction}

Research on concurrency in quantum computing started about 10 years ago, and it was motivated by two different requirements:\begin{itemize}\item \emph{Verification of quantum communication protocols}:
Quantum communication systems are already commercially available
from Id Quantique, MagiQ Technologies, SmartQuantum and NEC. Their advantage over classical communication is that
security is provable based on the principles of quantum mechanics. As is well known, it is very difficult to guarantee correctness of even classical
communication protocols in the stage of design. Thus, numerous techniques for verifying classical
communication protocols have been developed. Human intuition is much better adapted to the
classical world than the quantum world. This will make quantum protocol designers to commit
many more faults than classical protocol designers. So, it is even more
critical to develop formal methods for verification of quantum protocols (see for example~\cite{GPN08}, \cite{GPN10}, \cite{DG11}). Concurrency is a feature that must be encompassed into the formal models of quantum communication systems.

\item \emph{Programming for distributed quantum computing}: A major reason for distributed quantum computing, different from the
classical case, comes from the extreme difficulty of the physical implementation of functional
quantum computers (see for example~\cite{CE99}, \cite{YF09}). Despite convincing laboratory demonstrations of quantum computing devices,
it is beyond the ability of the current physical technology to scale them. Thus, a natural idea is to
use the physical resources of two or more small capacity quantum computers to simulate a large
capacity quantum computer. In fact, various experiments in the physical implementation of
distributed quantum computing have been frequently reported in recent years. Concurrency naturally arises in the studies of programming for distributed quantum computing.
\end{itemize}

The majority of work on concurrency in quantum computing is based on process algebras~\cite{JL04},~\cite{LJ04a},~\cite{GN05}, \cite{GN06}, \cite{La06}, ~\cite{FDJY07},~\cite{FD09},~\cite{FDY11},~\cite{DAV11}. This paper introduces a new model of concurrent quantum programs in terms of quantum Markov chains. This model is indeed a quantum extension of Hart, Sharir and Pnueli's model of probabilistic concurrent programs~\cite{HSP83}, \cite{SPH84}. Specifically, a concurrent quantum program consists of a finite set of processes. These processes share a state Hilbert space, and each of them is seen as a quantum Markov chain on the state space. The behaviour of each processes is described by a super-operator. This description of a single process follows Selinger, D'Hont and Panangaden's pioneering works~\cite{Se04}, \cite{DP06} on sequential quantum programs where the denotational semantics of a quantum program is given as a super-operator.
The super-operator description of sequential quantum programs was also adopted in one of the authors' work on quantum Floyd-Hoare logic \cite{Yi11}.
Similar to the classical and probabilistic cases~\cite{HSP83}, an execution path of a concurrent quantum program is defined to be an infinite sequence of the labels of their processes, and a certain fairness condition is imposed on an execution path to guarantee that all the processes fairly participate in a computation.

Reachability and termination are two of the central problems in program analysis and verification. The aim of this paper is to develop algorithms that compute the reachable states and decide the termination, respectively, of a concurrent quantum program. To this end, we need to overcome two major difficulties, which are peculiar to the quantum setting and would not arise in the classical case:

\begin{itemize}\item The state Hilbert space of a quantum program is a continuum and thus doomed-to-be infinite even when its dimension is finite. So, a brute-force search is totally ineffective although it may works well to solve a corresponding problem for a classical program. We circumvent the infinity problem of the state space by finding a finite characterization for reachability and termination of a quantum program through a careful analysis of the mathematical structure underlying them.
\item The super-operators used to describe the behaviour of the processes are operators on the space of linear operators on the state space, and they are very hard to directly manipulate. In particular, algorithms for computing super-operators are lacking. We adopt a kind of matrix representation for super-operators that allows us to conduct reachability and termination analysis of quantum programs by efficient matrix algorithms.
\end{itemize}

The paper is organized as follows. For convenience of the reader we briefly recall some basic notions
from quantum theory and fix the notations in Sec.~\ref{Pre}; but we refer to \cite{NC00} for more details. A Markov chain model of concurrent quantum programs is defined in Sec.~\ref{Mod}, where we also give a running example of quantum walks. In Sec.~\ref{RRE}, we present a characterization for reachable space and one for uniformly repeatedly reachable space of a quantum program, and develop two algorithms to compute them. A characterization of termination of a quantum program with fair execution paths and an algorithm for deciding it are given in Sec.~\ref{TER}. It should be pointed out that termination decision in Sec.~\ref{TER} is based on reachability analysis in Sec.~\ref{RRE}. A brief conclusion is drawn in Sec.~\ref{CCO}.

\section{Preliminaries and Notations}\label{Pre}
\subsection{Hilbert Spaces}
The state space of a quantum system is a Hilbert space.
In this paper, we only consider a finite dimensional Hilbert space $\hs$, which is a complex vector
space equipped with an inner product $\langle\cdot|\cdot\rangle$. A pure state of a quantum system is represented by a unit vector,
i.e., a vector $|\psi\rangle$ with $\langle\psi|\psi\rangle=1$.
Two vectors $|\varphi\rangle,|\psi\rangle$ in $\hs$ are orthogonal, written $|\varphi\rangle\perp|\psi\rangle$, if
their inner product is $0$. A basis of $\hs$ is orthonormal if its elements are mutually orthogonal, unit vectors. The trace of a linear operator $A$ on $\hs$ is defined to be
$tr(A)=\sum_{i}\langle i|A|i\rangle$, where $\{|i\rangle\}$ is an orthonormal basis of $\hs$. For a subset $V$ of $\hs$, the subspace $\spa V$ spanned by $V$ consists of all linear combinations
of vectors in $V$. For any subspace $X$ of $\hs$, its orthocomplement
is the subspace $X^{\bot}=\{|\varphi\rangle\in \mathcal{H}:|\varphi\rangle\perp|\psi\rangle\
{\rm for\ all}\ |\psi\rangle\in X\}$. The join of a family $\{X_i\}$ of subspaces is $\bigvee_i
X_i=\spa(\bigcup_i X_i).$ In particular, we write $X\vee Y$ for the join of
two subspaces $X$ and $Y$.
A linear operator $P$ is called the projection onto a subspace $X$ if
$P\ket{\psi}=\ket{\psi}$ for all $\ket{\psi}\in X$ and $P\ket{\psi}=0$
for all $\ket{\psi}\in X^{\bot}$. We write $P_X$ for the projection onto $X$.

A mixed state of a quantum system is represented by a density operator.
A linear operator $\rho$ on $\hs$ is called a density operator
(resp. partial density operator) if $\rho$ is positive-semidefinite in the sense that $\langle
\phi|\rho|\phi\rangle \geq 0$ for all $|\phi\rangle$, and
$tr(\rho)=1$ (resp. $tr(\rho)\leq 1$). For any statistical
ensemble $\{(p_i,|\psi_i\rangle)\}$ of pure quantum states with $p_i> 0$
for all $i$ and $\sum_i p_i=1$,  $\rho=\sum_{i}p_i|\psi_i\rangle\langle\psi_i|$
is a density operator. Conversely, each density operator can be generated by
an ensemble of pure states in this way. In particular, we write $\psi$ for the
density operator $\op{\psi}{\psi}$ generated by a single pure states $\ket{\psi}$.
The support of a partial density operator $\rho$, written $\supp (\rho)$, is the space spanned by its eigenvectors with nonzero eigenvalues.

\begin{lem}\label{prel-1} For any $p>0$ and partial density operators $\rho, \sigma$, we have: (1) $\supp(p\rho)=\supp(\rho)$;
(2) $\supp(\rho)\subseteq\supp(\rho+\sigma)$; (3) $\supp(\rho+\sigma)=\supp(\rho)\vee\supp(\sigma)$.
\end{lem}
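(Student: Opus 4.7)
The plan is to work with the characterization $\supp(\rho)=(\ker\rho)^{\perp}$, which holds for any positive-semidefinite operator because its spectral decomposition splits $\hs$ orthogonally into the kernel and the span of eigenvectors with nonzero eigenvalues. With this in hand, all three claims become statements about kernels, which are easier to manipulate than spans of eigenvectors. The only ``analytic'' fact I would invoke is the standard observation that for a positive-semidefinite operator $A$, one has $A|\psi\rangle=0$ iff $\langle\psi|A|\psi\rangle=0$; this follows by writing $A=B^{\dagger}B$ (e.g.\ $B=\sqrt{A}$) so that $\langle\psi|A|\psi\rangle=\|B|\psi\rangle\|^{2}$.

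For (1), I would just note that for $p>0$, $(p\rho)|\psi\rangle=0$ iff $\rho|\psi\rangle=0$, so $\ker(p\rho)=\ker(\rho)$ and hence the supports coincide.

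For (2), I would show the dual inclusion $\ker(\rho+\sigma)\subseteq\ker(\rho)$. If $(\rho+\sigma)|\psi\rangle=0$, then $\langle\psi|\rho|\psi\rangle+\langle\psi|\sigma|\psi\rangle=0$; since both summands are nonnegative by positive-semidefiniteness of $\rho$ and $\sigma$, each must vanish, and the analytic fact above then forces $\rho|\psi\rangle=0$. Taking orthocomplements yields $\supp(\rho)\subseteq\supp(\rho+\sigma)$, and the symmetric argument gives $\supp(\sigma)\subseteq\supp(\rho+\sigma)$, so $\supp(\rho)\vee\supp(\sigma)\subseteq\supp(\rho+\sigma)$, which is one half of (3).

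For the other half of (3), I would pick $|\psi\rangle\in(\supp(\rho)\vee\supp(\sigma))^{\perp}=\supp(\rho)^{\perp}\cap\supp(\sigma)^{\perp}=\ker(\rho)\cap\ker(\sigma)$; then $(\rho+\sigma)|\psi\rangle=0$, so $|\psi\rangle\in\ker(\rho+\sigma)=\supp(\rho+\sigma)^{\perp}$, and taking orthocomplements once more gives the reverse inclusion. I do not anticipate any real obstacle: the only place where one has to be careful is the step that extracts $\rho|\psi\rangle=0$ from $\langle\psi|\rho|\psi\rangle=0$, which is where positive-semidefiniteness (rather than mere Hermiticity) is essential; the rest is bookkeeping with orthocomplements and the identity $(X\vee Y)^{\perp}=X^{\perp}\cap Y^{\perp}$.
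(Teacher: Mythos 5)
Your proof is correct and complete: the reduction of all three claims to statements about kernels via $\supp(\rho)=(\ker\rho)^{\perp}$, together with the observation that $\langle\psi|A|\psi\rangle=\|\sqrt{A}\,|\psi\rangle\|^{2}$ forces $A|\psi\rangle=0$ whenever the quadratic form vanishes, is exactly the standard argument, and the orthocomplement bookkeeping in parts (2) and (3) is carried out without gaps. The paper itself states this lemma without any proof (treating it as a known fact about supports of positive-semidefinite operators), so there is nothing to contrast your approach with; your write-up simply supplies the omitted justification.
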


\subsection{Super-Operators}

A super-operator is a mathematical formalism used to describe a  broad class of transformations that a quantum system can undergo.
 A super-operator on $\hs$ is a linear operator
$\E$ from the space of linear operators on $\hs$ into itself, satisfying (1) $tr[\E(\rho)]\leq tr(\rho)$ for any $\rho$; (2) Complete positivity(CP): for any extra Hilbert
space $\hs_k$, $(\mathcal{I}_k\otimes \E)(A)$ is positive
provided $A$ is a positive operator on $\hs_k\otimes \hs$, where
$\mathcal{I}_k$ is the identity operation on $\hs_k$.
Furthermore, if $tr[\E(\rho)]=tr(\rho)$
for any $\rho$, then $\E$ is said
to be trace-preserving. Each super-operator $\E$ enjoys the Kraus representation: there exists a set of
operators $\{E_i\}$ satisfying (1) $\E(\rho)=\sum_{i}E_i\rho E_i^{\dag}$ for all density
operators $\rho$; (2) $\sum_{i}E_i^{\dag}E_i\leq I$, with
equality for trace-preserving $\E$, where $I$ is the
identity operator. In this case, we write $\E=\sum_i E_i\cdot E_i^{\dag}$.
The image of subspace $X$ of $\hs$ under $\E$ is $\E(X)=\bigvee_{\ket{\psi}\in X}\supp (\E(\psi)),$ and
the pre-image of $X$ under $\E$ is $\E^{-1}(X)=\{\ket{\psi}\in\hs:\supp
(\E(\psi))\subseteq X\}.$

\begin{lem}\label{prel-2} (1) $\supp(\rho)\subseteq\supp(\sigma)\Rightarrow\supp(\E(\rho))\subseteq \supp(\E(\sigma))$, and $\supp(\rho)=\supp(\sigma)\Rightarrow\supp(\E(\rho))= \supp(\E(\sigma))$.

(2) $\supp(\E(\rho))\subseteq \supp((\E+\F)(\rho))$. \ \ \ (3) $\E(X)=\supp(\E(P_X))$.

(4)  $X\subseteq Y\Rightarrow \E(X)\subseteq \E(Y)$. \ \ \ \ \ \ \ \ \ \ \ \ (5)  $\E(X)\subseteq(\E+\F)(X)$.

(6) If $\E=\sum_{i}E_i\cdot E_i^{\dag}$, then $\E^{-1}(X)=[\supp(\mathcal{E}^{\ast}(P_{X^{\bot}}))]^{\bot},$
where $\mathcal{E}^{\ast}=\sum_i E_i^{\dag} \cdot E_i$ is
the (Schr\"odinger-Heisenberg) dual of $\E$.\end{lem}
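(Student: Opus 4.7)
The plan is to prove the six parts in order, using the monotonicity of super-operators together with the basic fact that for positive operators $A\leq B$ one has $\supp(A)\subseteq\supp(B)$, plus the previously stated Lemma 1 on supports of sums.

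For part (1), I would note that $\supp(\rho)\subseteq\supp(\sigma)$ implies the existence of a constant $\lambda>0$ with $\rho\leq\lambda\sigma$ (take $\lambda$ larger than the ratio of the largest eigenvalue of $\rho$ to the smallest nonzero eigenvalue of $\sigma$). Complete positivity of $\E$ then gives $\E(\rho)\leq\lambda\E(\sigma)$, so $\supp(\E(\rho))\subseteq\supp(\lambda\E(\sigma))=\supp(\E(\sigma))$ by Lemma~\ref{prel-1}(1). The equality case is symmetric. Part (2) is immediate: since $\F(\rho)$ is positive, $(\E+\F)(\rho)\geq\E(\rho)$, and the same support-monotonicity fact together with Lemma~\ref{prel-1}(2) gives the inclusion. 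Part (5) then follows from (2) by taking the join over $|\psi\rangle\in X$ on both sides. Part (4) is essentially definitional: enlarging the indexing set $X$ to $Y$ can only enlarge the join of supports.

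For part (3), I would fix an orthonormal basis $\{|e_i\rangle\}$ of $X$, so that $P_X=\sum_i|e_i\rangle\langle e_i|$ and $\E(P_X)=\sum_i\E(|e_i\rangle\langle e_i|)$. Iterated application of Lemma~\ref{prel-1}(3) yields $\supp(\E(P_X))=\bigvee_i\supp(\E(|e_i\rangle\langle e_i|))\subseteq\E(X)$ by the definition of $\E(X)$. For the reverse inclusion, any $|\psi\rangle\in X$ satisfies $\supp(\psi)\subseteq X=\supp(P_X)$, hence by part (1) $\supp(\E(\psi))\subseteq\supp(\E(P_X))$; taking the join over all such $|\psi\rangle$ gives $\E(X)\subseteq\supp(\E(P_X))$.

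Part (6) is the main substantive step and I would prove it by a trace-duality chain of equivalences. Starting from the definition, $|\psi\rangle\in\E^{-1}(X)$ iff $\supp(\E(\psi))\subseteq X$, which is equivalent to $P_{X^{\bot}}\E(\psi)=0$, i.e. $\tr(P_{X^{\bot}}\E(\psi))=0$. Using the Kraus form $\E=\sum_i E_i\cdot E_i^{\dag}$ and its dual $\E^{\ast}=\sum_i E_i^{\dag}\cdot E_i$, the cyclicity of the trace gives $\tr(P_{X^{\bot}}\E(\psi))=\tr(\E^{\ast}(P_{X^{\bot}})\,\psi)=\langle\psi|\E^{\ast}(P_{X^{\bot}})|\psi\rangle$. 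Since $\E^{\ast}(P_{X^{\bot}})$ is positive, this equals zero exactly when $|\psi\rangle$ lies in the orthocomplement of $\supp(\E^{\ast}(P_{X^{\bot}}))$, which is the desired identity.

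The only step requiring real care is part (6), where one must check that the duality $\tr(A\E(B))=\tr(\E^{\ast}(A)B)$ really follows from the given Kraus decomposition (a short computation using cyclicity) and that $\langle\psi|M|\psi\rangle=0$ for a positive operator $M$ forces $M|\psi\rangle=0$. The other parts are essentially corollaries of monotonicity. No step seems to pose a genuine obstacle.
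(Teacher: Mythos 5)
Your proof is correct. The paper states Lemma~\ref{prel-2} without proof, treating it as a standard preliminary fact, so there is no argument of the paper's to compare against; the route you take is the standard one and every step checks out: the reduction of (1) to an operator inequality $\rho\leq\lambda\sigma$ followed by positivity of $\E$, the observation that $A\leq B$ implies $\supp(A)\subseteq\supp(B)$ for positive operators, and for (6) the equivalence $\supp(\E(\psi))\subseteq X\Leftrightarrow\tr(P_{X^{\bot}}\E(\psi))=0$ combined with the duality $\tr(P_{X^{\bot}}\E(\psi))=\langle\psi|\E^{\ast}(P_{X^{\bot}})|\psi\rangle$ and the fact that $\langle\psi|M|\psi\rangle=0$ forces $M|\psi\rangle=0$ for positive $M$. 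All of these are valid in the finite-dimensional setting the paper works in.
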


\subsection{Matrix Representation of Super-Operator}
The matrix representation of a super-operator is usually easier in \cite{YYFD11} to
manipulate than the super-operator itself. If $\mathcal{E}=\sum_iE_i\cdot E_i^{\dag}$ and $\dim \h=d$, then the matrix representation of
$\mathcal{E}$ is the $d^2\times d^2$ matrix $M=\sum_i
E_i\otimes E_i^*,$ where $A^{\ast}$ stands for the conjugate of
matrix $A$, i.e., $A^{\ast}=(a^{\ast}_{ij})$ with $a^{\ast}_{ij}$
being the conjugate of complex number $a_{ij}$, whenever
$A=(a_{ij})$. According to \cite{YYFD11}, we have the following

\begin{lem}\label{prel-3} (1)
The modulus of any eigenvalue of $M$ is less or equal to 1.

(2) We write
$|\Phi\rangle=\sum_j|jj\rangle$ for the (unnormalized) maximally
entangled state in $\h\otimes\h$, where $\{|j\rangle\}$ is an
orthonormal basis of $\hs$. Then for any $d\times d$ matrix $A$,
we have $(\mathcal{E}(A)\otimes I)|\Phi\rangle=M(A\otimes
I)|\Phi\rangle.$\end{lem}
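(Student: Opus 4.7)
My plan is to prove part (2) first, since part (1) will follow from it via a short spectral argument. The workhorse for (2) is the ``ricochet'' identity $(B\otimes I)|\Phi\rangle = (I\otimes B^T)|\Phi\rangle$, valid for every $d\times d$ matrix $B$ and verified by direct expansion against $|\Phi\rangle=\sum_j|jj\rangle$. Applied to each Kraus summand, it yields
\[
(E_i\otimes E_i^*)(A\otimes I)|\Phi\rangle = (E_iA\otimes I)(I\otimes E_i^*)|\Phi\rangle = \bigl(E_iA(E_i^*)^T\otimes I\bigr)|\Phi\rangle = (E_iAE_i^\dagger\otimes I)|\Phi\rangle,
\]
using $(E_i^*)^T = E_i^\dagger$. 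Summing over $i$ then produces $M(A\otimes I)|\Phi\rangle = (\mathcal{E}(A)\otimes I)|\Phi\rangle$.

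For part (1), I would exploit that the map $A\mapsto (A\otimes I)|\Phi\rangle$ is a linear bijection from the space of $d\times d$ matrices onto $\C^{d^2}$. Hence any eigenvector of $M$ has the form $(A\otimes I)|\Phi\rangle$ for some nonzero matrix $A$, and the eigenvalue equation $Mv=\lambda v$ translates via (2) into $\mathcal{E}(A)=\lambda A$. Iterating gives $\mathcal{E}^n(A)=\lambda^n A$ for every $n\ge 1$. To convert this into $|\lambda|\le 1$, I would show that $\|\mathcal{E}^n(A)\|_1$ stays bounded uniformly in $n$. Because $A$ need not be positive, I first decompose $A=P_1-P_2+i(P_3-P_4)$ into at most four positive-semidefinite parts via Hermitian/anti-Hermitian splitting followed by a spectral decomposition of each Hermitian piece. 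Since $\mathcal{E}$ is completely positive and trace-non-increasing, $\tr(\mathcal{E}^n(P_j))\le \tr(P_j)$ for each $j$ and each $n$, so $\|\mathcal{E}^n(A)\|_1\le \sum_j\tr(P_j)$ is a bound independent of $n$. Combined with $\|\mathcal{E}^n(A)\|_1 = |\lambda|^n\|A\|_1$, this forces $|\lambda|\le 1$.

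The main subtlety lies in the uniform norm bound in part (1): trace-non-increase only constrains traces of positive operators, so the four-part positive decomposition of $A$ is the device that lets the iteration argument close. Part (2) itself is largely bookkeeping once the ricochet identity is in hand, and the bijection statement used at the start of (1) is an immediate dimension count together with the observation that $(A\otimes I)|\Phi\rangle=0$ already forces the matrix entries of $A$ to vanish.
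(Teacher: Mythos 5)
Your proof is correct. The paper does not actually prove Lemma~3 itself---it imports it from the cited reference \cite{YYFD11}---but your argument is sound and is essentially the standard one: the ricochet identity $(B\otimes I)|\Phi\rangle=(I\otimes B^{T})|\Phi\rangle$ gives part (2) by direct computation on each Kraus summand, and for part (1) the bijection $A\mapsto (A\otimes I)|\Phi\rangle$ converts the eigenvalue equation into $\mathcal{E}(A)=\lambda A$, after which the four-part positive decomposition of $A$ together with complete positivity and trace-non-increase yields the uniform bound on $\|\mathcal{E}^{n}(A)\|_{1}$ that forces $|\lambda|\le 1$.
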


\subsection{Quantum Measurements}

A quantum measurement is described by a collection $\{M_m\}$ of operators, where the indexes $m$ refer to the
measurement outcomes. It is required that the measurement operators
satisfy the completeness equation $\sum_{m}M_m^{\dag}M_m=I_{\hs}$. If
the system is in state $\rho$, then the probability that measurement
result $m$ occurs is given by $p(m)=tr(M_m^{\dag}M_m\rho)$, and the
state of the system after the measurement is $\frac{M_m\rho
M_m^{\dag}}{p(m)}.$

\section{A Model of Concurrent Quantum Programs}\label{Mod}
Our model is a
quantum extension of Hart, Sharir and Pnueli's probabilistic concurrent programs \cite{HSP83}.
A concurrent quantum program consists of a finite set $K=\{1,2,\cdots,m\}$ of
quantum processes, and these processes have a common state space,
which is assumed to be a $d$-dimensional Hilbert space $\hs$.
With each $k\in K$ we associate a trace-preserving super-operator
$\E_k$, describing a single atomic action or evolution of process $k$.
 Also, we assume a termination condition for the program. At the
end of each execution step, we check whether this condition is satisfied or not.
The termination condition is modeled by a yes-no measurement
$\{M_0,M_1\}$: if the measurement outcome is $0$, then the
program terminates, and we can imagine the program state falls into
a terminal (absorbing) space and it remains there forever; otherwise, the
program will enter the next step and continues to perform a
quantum operation chosen from $K$.
\begin{defn}\label{p-def}A concurrent quantum program defined on a $d$-dimensional Hilbert space $\hs$ is a pair
$\mathcal{P}=(\{\E_k:k\in K\}, \{M_0,M_1\}),$ where:
\begin{enumerate}
\item $\E_k$ is a super-operator on $\hs$ for each $k\in K$;
\item $\{M_0, M_1\}$ is a measurement on $\hs$ as the termination test.
\end{enumerate}
\end{defn}
Any finite string $s_1 s_2\cdots s_m$ or infinite string $s_1 s_2\cdots s_i\cdots$
of elements of $K$ is called a execution path of the program. Thus, the sets of finite
and infinite execution paths of program $\mathcal{P}$ are \begin{eqnarray*}
S&=&K^{\omega} =\{s_1 s_2\cdots s_i\cdots:s_i\in K\ {\rm for\ every}\ i\geq 1\},\\
S_{fin}&=&K^*=\{s_1 s_2\cdots s_m:m\geq 0\ {\rm and}\ s_i\in K\ {\rm for\ all}\ 1\leq i\leq m\},
\end{eqnarray*}respectively. A subset of $S$ is usually called a schedule.

For simplicity of presentation, we introduce the
notation $\mathcal{F}_k$ for any $k\in K$ which stands for the super-operator defined
by
$\mathcal{F}_{k}(\rho)=\E_k(M_1\rho M_1^\dagger)$
for all density operators $\rho$.
Assume the initial state is $\rho_0$.
The execution of
the program under path $s=s_1 s_2\cdots s_k\cdots\in S$ can be described as follows. At the first
step, we perform the termination measurement $\{M_0,M_1\}$ on the
initial state $\rho_0$. The probability that the program terminates;
that is, the measurement outcome is $0$, is
$\tr[M_0\rho_0M_0^{\dag}]$.  On the other hand, the probability that the program
does not terminate; that is, the measurement outcome is $1$, is
$p_1^s=\tr[M_1\rho_0M_1^{\dag}],$ and the program state after the
outcome $1$ is obtained is
$\rho_1^s=M_1\rho_0M_1^{\dag}/p^s_1.$ We adopt
Selinger's normalization convention~\cite{Se04} to encode
probability and density operator into a partial
density operator
$p_1^s\rho_1^s=M_1\rho_0M_1^{\dag}.$ Then this (partial) state is transformed
by the quantum operation $\E_{s_1}$ to $\E_{s_1}(M_1\rho_0M_1^{\dag})=\mathcal{F}_{s_1}(\rho_0)$.
The program continues its computation step by step according to the path $s$.
In general, the $(n+1)$th step is executed
upon the partial density operator
$p^s_{n}\rho^s_{n}=\mathcal{F}_{s_n}\circ\cdots\circ
\mathcal{F}_{s_2}\circ \mathcal{F}_{s_1}(\rho_0),$ where $p^s_{n}$ is the probability
that the program does not terminate at the $n$th step, and
$\rho^s_{n}$ is the program state after the termination measurement
is performed and outcome $1$ is reported at the $n$th step.
For simplicity, let $\mathcal{F}_f$ denote the super-operator $\mathcal{F}_{s_n}\circ\cdots\circ
\mathcal{F}_{s_2}\circ \mathcal{F}_{s_1}$ for string $f=s_1s_2\cdots s_n$.
Thus, $p^s_n\rho^s_{n}=\mathcal{F}_{s[n]}(\rho_0),$
where $s[n]$ is used to denote the head $s_1s_2\cdots s_n$ for any $s=s_1s_2\cdots s_n\cdots\in S$.
The probability that the program
terminates in the $(n+1)$th step is then
$\tr(M_0(\mathcal{F}_{s[n]}(\rho_0))M_0^{\dag}),$ and the
probability that the program does not terminate in the $(n+1)$th
step is $p^s_{n+1}=\tr(M_1(\mathcal{F}_{s[n]}(\rho_0))M_1^{\dag}).$

\subsection{Fairness}
To guarantee that all the processes in a concurrent program can fairly participate in a computation, a certain fairness condition on its execution paths is needed.

\begin{defn}An infinite execution path $s=s_1s_2...s_i...\in S$ is fair
if each process appears infinitely often in $s$; that is, for each $k\in K$,
there are infinitely many $i\geq 1$ such that $s_i=k$.
\end{defn}
We write $F=\{s:s\in S~~\rm{is ~~fair}\}$ for the schedule of all fair execution paths.

\begin{defn}
A finite execution path $\sigma=s_1s_2\cdots s_n\in S_{fin}$ is called a
 \textit{fair piece} if each process appears during $\sigma$;
 that is, for each $k\in K$, there exists $i\leq n$ such that $s_i=k$.\end{defn}
$F_{fin}$ is used to denote the set of all fair pieces:
$F_{fin}=\{\sigma:\sigma\in S_{fin}~\rm{is~ a~fair~piece}\}.$
It is obvious that $F=F_{fin}^{\omega};$ in other words,
every fair infinite execution path $s\in F$ can be divided into
an infinite sequence of fair pieces: $s=f_1f_2\cdots f_k\cdots$, where $f_i\in F_{fin}$ for each $i>0$.
The fairness defined above can be generalized by introducing the notion
of fairness index, which measures the occurrence frequency of every
 process in an infinite execution path.
\begin{defn}\label{ind-def}
For any infinite execution path $s\in F$, its fairness index $f(s)$ is the minimum,
 over all processes, of the lower limit of the occurrence frequency of the processes in $s$; that is,
  $$f(s)=\min_{k\in K} \lim_{t\rightarrow \infty} \inf_{n>t} \frac{s(n,k)}{n},$$
where $s(n,k)$ is the number of occurrences of $k$ in $s[n]$.\end{defn}
For any $\delta\geq 0$, we write $F_{\delta}$ for the set of infinite
execution paths whose fairness index is greater than $\delta$:
$F_{\delta}=\{s:s\in S\ {\rm and}\ f(s)>\delta\}.$
Intuitively, within an infinite execution path in $F_{\delta}$,
each process will be woken up with frequency greater than $\delta$. It is clear that
$F_0\subsetneq F.$

\subsection{Running Example}\label{runes}
We consider two quantum walks on a circle $C_3=(V,E)$ with vertices $V=\{0,1,2\}$ and edges $E=\{(0,1), (1,2),(2,0)\}$.
The first quantum walk $\mathcal{W}_1=(\{W_1\},\{M_0,M_1\})$ is given as follows:
\begin{itemize}
\item The state space is the $3-$dimensional Hilbert space with computational basis $\{\ket{i}|i\in
V\}$;
\item The initial state is $\ket{0}$; this means that the walk starts at the vertex $0$;
\item A single step of the walk is defined by the unitary operator:
$$W_1=\frac{1}{\sqrt{3}}\left(\begin{array}{ccc}
1 & 1 & 1 \\
1 & w & w^2\\
1 & w^2 & w
\end{array}\right),$$
where $w=e^{2\pi i/3}$.
Intuitively, the probabilities of walking to the left
and to the right are both $1/3$, and there is also a
probability $1/3$ of not walking.
\item The termination measurement $\{M_0,M_1\}$ is defined by $$M_0=\op{2}{2},\ M_1=I_3-\op{2}{2},$$
where $I_3$ is the $3\times 3$ unit matrix.
\end{itemize} The second walk $\mathcal{W}_2=(\{W_2\},\{M_0,M_1\})$ is similar to the first one, but its single step is described by unitary operator
$$W_2=\frac{1}{\sqrt{3}}\left(\begin{array}{ccc}
1 & 1 & 1\\
1 & w^2 & w\\
1 & w & w^2
\end{array}\right).$$Then we can put these two quantum walks together to form a concurrent program $\mathcal{P}=(\{W_1,W_2\},\{M_0,M_1\})$. For example,
the execution of this concurrent program according to unfair path $1^{\omega}\notin F$ is equivalent to a sequential program $(\{W_1\},\{P_0,P_1\})$;
and the execution of $\mathcal{P}$ according to fair path $(12)^{\omega}\in F$ is as follows: we perform the termination measurement $\{M_0,M_1\}$ on the
initial state $\rho_0$, then the nonterminating part of the program state is transformed by
the super-operator $\mathcal{U}_1=W_1\cdot W_1^{\dag}$, followed by the termination measurement, and then the application of the super-operator $\mathcal{U}_2=W_2\cdot W_2^{\dag}$, and this procedure is repeated infinitely many times.

\section{Reachability}\label{RRE}

Reachability is at the centre of program analysis. A state is reachable if some finite execution starting in the initial state ends in it. What concerns us in the quantum case is the subspace of $\mathcal{H}$ spanned by reachable states.
\begin{defn}
The reachable space of program $\mathcal{P}=(\{\E_k:k\in K\}, \{M_0,M_1\})$
starting in the initial state $\rho_0$ is $$\hs_{R}=\bigvee_{s\in S,j\geq 0}\supp~\mathcal{F}_{s[j]}(\rho_0)=\bigvee_{f\in S_{fin}}\supp~\mathcal{F}_f(\rho_0).$$
\end{defn}
We have the following closed form characterization of the reachable space.
\begin{thm}\label{reachablem}
$\label{xxx}\hs_R=\supp(\sum_{i=0}^{d-1} \F^i(\rho_0)),$ where $d=\dim\mathcal{H}$ is the dimension of $\h$,
and $\F=\sum_{k\in K}\mathcal{F}_k$.
\end{thm}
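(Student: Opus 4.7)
The plan is to recast the reachable space as an ascending chain of subspaces and close it via a dimension argument. First I would rewrite $\hs_R$ in a form that involves only the aggregated super-operator $\F$. Distributing composition over addition gives $\F^n = \sum_{f \in K^n} \F_f$, and Lemma~\ref{prel-1}(3) applied repeatedly yields $\supp \F^n(\rho_0) = \bigvee_{|f|=n} \supp \F_f(\rho_0)$. Taking the join over all $n \geq 0$ turns the definition of $\hs_R$ into $\hs_R = \bigvee_{n \geq 0} \supp \F^n(\rho_0)$. Setting $X_n = \supp(\sum_{i=0}^{n} \F^i(\rho_0))$, another application of Lemma~\ref{prel-1}(3) gives $X_n = \bigvee_{i=0}^{n} \supp \F^i(\rho_0)$, so $\{X_n\}$ is an ascending chain of subspaces of $\hs$ and $\hs_R = \bigvee_{n \geq 0} X_n$.

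The heart of the argument is to show that this chain stabilizes by index $d-1$. The key step is a stability propagation lemma: if $X_n = X_{n+1}$ then $X_{n+1} = X_{n+2}$. Indeed, $X_n = X_{n+1}$ forces $\supp \F^{n+1}(\rho_0) \subseteq X_n = \supp(\sum_{i=0}^n \F^i(\rho_0))$, and applying $\F$ together with Lemma~\ref{prel-2}(1) yields $\supp \F^{n+2}(\rho_0) \subseteq \supp \F(\sum_{i=0}^n \F^i(\rho_0)) = \supp(\sum_{i=1}^{n+1} \F^i(\rho_0)) \subseteq X_{n+1}$. Induction then propagates stability to every subsequent index.

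A dimension count now finishes the proof. Each strict inclusion $X_i \subsetneq X_{i+1}$ raises dimension by at least one, so starting from $\dim X_0 \geq 1$ (the case $\rho_0 = 0$ is trivial) the chain must stabilize by some index $n^* \leq d-1$. Combined with the stability propagation lemma, $X_{d-1} = X_d = X_{d+1} = \cdots$, and hence $\hs_R = \bigvee_{n \geq 0} X_n = X_{d-1} = \supp(\sum_{i=0}^{d-1} \F^i(\rho_0))$.

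The main obstacle I anticipate is the stability propagation step. A naive dimension argument alone only forces the dimension sequence to stabilize, not the subspaces themselves, so one needs the interplay between super-operator application and support supplied by Lemma~\ref{prel-2}(1), combined with the additive support identity of Lemma~\ref{prel-1}(3). Everything else is routine bookkeeping of sums and supports.
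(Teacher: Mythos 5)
Your proof is correct and follows essentially the same route as the paper's: the ascending chain $X_n=\supp(\sum_{i=0}^{n}\F^i(\rho_0))$, stability propagated forward by applying $\F$ and invoking Lemma~\ref{prel-2}(1) together with Lemma~\ref{prel-1}(3), and a dimension count bounding the stabilization index by $d-1$. The only cosmetic difference is that you establish the identity $\hs_R=\bigvee_{n\geq 0}\supp\,\F^n(\rho_0)$ up front, whereas the paper handles the two inclusions slightly asymmetrically (using $\supp\mathcal{F}_f(\rho_0)\subseteq\supp\F^{|f|}(\rho_0)$ for the harder direction).
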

\textit{Proof:} We write $X$ for the right-hand side. From Lemma~\ref{prel-1}, we see that
$X=\bigvee\{\supp \mathcal{F}_f(\rho_0):  f\in S_{fin}, |f|<d\},$
where $|f|$ denotes the length of string $f$.
According to the definition of reachable space, we know that $X\subseteq \h_{R}$.
To prove the inverse part $X\supseteq \h_{R}$,
for each $n\geq 0$, we define subspace $Y_n$  as follows:
$Y_n:=\supp(\sum_{i=0}^n \F^i(\rho_0)).$
Due to Lemma~\ref{prel-1}, we know that
$Y_0\subseteq Y_1\subseteq \cdots \subseteq Y_n\subseteq\cdots.$
Suppose $r$ is the smallest integer satisfying $Y_r=Y_{r+1}$.
We observe that $Y_{n+1}=\supp(\rho+\F(P_{Y_{n}}))$ for all $n\geq 0$. Then it follows that $Y_n=Y_r$ for all $n\geq r$. On the other hand,
we have $Y_0\subsetneq Y_1\subsetneq\cdots \subsetneq Y_r.$
So, $0<d_0<d_1<\cdots<d_r\leq d,$ where $d_0$ is the rank of $\rho_0$,
and $d_i$ is the dimension of subspace $Y_i$ for $1\leq i\leq r$.
Therefore, we have $r\leq d-1$ and $Y_{d-1}=Y_r\supseteq Y_n$ for all $n$. Finally, for any $f\in S_{fin}$, it follows from Lemma~\ref{prel-2} that
$\supp(\mathcal{F}_f(\rho_0))\subseteq \supp(\F^{|f|}(\rho_0))\subseteq Y_{|f|}\subseteq Y_{d-1}=X.$
Thus, $\h_{R}\subseteq X$. \hfill $\blacksquare$
Now we are able to present an algorithm computing reachable subspace using matrix representation of super-operators. We define $\mathcal{G}=\sum_{k\in K}\mathcal{F}_k/|K|$.

\begin{algorithm}
\caption{Computing reachable space \label{alg:Rs}}
\SetKwInOut{Input}{input}\SetKwInOut{Output}{output}
\Input{An input state $\rho_0$, and the matrix representation $G$ of
$\mathcal{G}$}
\Output{An orthonormal basis $B$ of $\mathcal{H}_{R}$.}
$\ket{x}\leftarrow (I-G/2)^{-1}(\rho_0\otimes I)|\Phi\rangle$\;
(* $|\Phi\rangle=\sum_j|j_Aj_B\rangle$ is the unnormalized maximally entangled
state in $\h\otimes\h$  *)\\
\For {$j=1:d$}{
$\ket{y_j}\leftarrow\langle j_B|x\rangle;$
}
\textbf{set of} states $B\leftarrow\emptyset$\;
\textbf{integer} $l\leftarrow~0$\; 
\For {$j=1:d$}{
$\ket{z}\leftarrow \ket{y_j}-\sum_{k=1}^{l}\ip{b_k}{y_j}\ket{b_k}$\;
\If{$\ket{z}\neq 0$}{
$l\leftarrow~l+1$\; 
$\ket{b_l}\leftarrow \ket{z}/\sqrt{\ip{z}{z}}$\;
$B\leftarrow B\cup\{\ket{b_l}\}$\;
}
}
\Return
\end{algorithm}
\begin{thm}\label{Al-1}Algorithm 1 computes the reachable space in time  $\mathcal{O}(d^{4.7454})$ , where $d=\dim\mathcal{H}$.
\end{thm}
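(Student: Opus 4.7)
The plan has two parts: correctness of the output and the claimed time bound.

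For correctness, the heart of the matter is to show that Line~1 produces $\ket{x}=(\sigma\otimes I)\ket{\Phi}$ with $\sigma=\sum_{i=0}^{\infty}(1/2)^i\,\mathcal{G}^i(\rho_0)$, and then to identify $\supp(\sigma)$ with $\hs_R$. By Lemma~\ref{prel-3}(1) every eigenvalue of $G/2$ has modulus at most $1/2$, so $I-G/2$ is invertible and equals the convergent Neumann series $\sum_{i\geq 0}(G/2)^i$. Iterating Lemma~\ref{prel-3}(2) gives $G^i(\rho_0\otimes I)\ket{\Phi}=(\mathcal{G}^i(\rho_0)\otimes I)\ket{\Phi}$, and summing with the geometric weights yields the claimed form of $\ket{x}$. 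Expanding $\ket{\Phi}=\sum_j\ket{jj}$ then immediately produces $\ket{y_j}=\ip{j_B}{x}=\sigma\ket{j}$, so the Gram--Schmidt loop in Lines~8--14 returns an orthonormal basis of $\spa\{\sigma\ket{j}:1\leq j\leq d\}=\supp(\sigma)$, since $\sigma$ is positive semidefinite.

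It remains to show $\supp(\sigma)=\hs_R$. Since $\mathcal{G}=\F/|K|$, Lemma~\ref{prel-1}(1) gives $\supp(\mathcal{G}^i(\rho_0))=\supp(\F^i(\rho_0))$, and the positivity of the weights $(1/2)^i$ together with iterated use of Lemma~\ref{prel-1}(3) yields $\supp(\sigma)=\bigvee_{i\geq 0}\supp(\F^i(\rho_0))$. The stabilization argument used inside the proof of Theorem~\ref{reachablem} (the chain $Y_n$ is non-decreasing and must stabilize within $d-1$ steps in a $d$-dimensional space) then collapses this join to $\supp(\sum_{i=0}^{d-1}\F^i(\rho_0))=\hs_R$, completing correctness.

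For the time bound, everything is dominated by the $d^2\times d^2$ linear solve in Line~1. Using fast matrix inversion with matrix-multiplication exponent $\omega\leq 2.3727$ this costs $\mathcal{O}((d^2)^{\omega})=\mathcal{O}(d^{4.7454})$, which absorbs the $\mathcal{O}(d^3)$ cost of extracting the $d$ vectors $\ket{y_j}$ of dimension $d$ and running Gram--Schmidt on them. I expect the main subtlety to lie not in the algebraic manipulation nor in the cost count, but in the support-identification step: one must check that geometric damping by $(1/2)^i$ does not shrink the subspace spanned by the iterates, and this is exactly where the positivity of every summand and the finite-dimensional stabilization bound $r\leq d-1$ inherited from Theorem~\ref{reachablem} are essential.
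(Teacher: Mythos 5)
Your proposal is correct and follows essentially the same route as the paper's proof: invertibility of $I-G/2$ via Lemma~\ref{prel-3}(1), identification of $\ket{y_j}$ with $\sigma\ket{j}$ through the maximally entangled state, Gram--Schmidt to extract $\supp(\sigma)$, and the reduction to Theorem~\ref{reachablem} for the identification with $\hs_R$, with the same $\mathcal{O}(d^{2\omega})$ dominant cost. If anything, you are slightly more explicit than the paper about the inclusion $\supp(\sigma)\subseteq\hs_R$ (via the stabilization bound $r\leq d-1$), which the paper leaves as ``clear.''
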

\textit{Proof:} It follows from Lemma \ref{prel-3}(1) that $I-G/2$ is invertible, and $\sum_{i=0}^\infty (G/2)^i=(I-G/2)^{-1}$.
We write $\rho=\sum_{i=0}^\infty \mathcal{G}^i(\rho_0)/2^i$, and have
$$(\rho\otimes I)\ket{\Phi}=\sum_{i=0}^\infty (G/2)^i(\rho_0\otimes I)\ket{\Phi},$$ and the existence of $\rho$ immediately follows from Lemma \ref{prel-3}. We further see that $\ket{x}=(\rho\otimes I)\ket{\Phi}=\sum_{j}\rho\ket{j_A}\ket{j_B}$ and $\ket{y_i}=\rho\ket{j_A}$. Note that $B$ is obtained from $\{|y_j\rangle\}$ by the Gram-Schmidt procedure. So, $\supp(\rho)=\spa \{\rho\ket{j}\}=\spa B$.
It is clear that $
\mathcal{H}_R=\supp(\sum_{i=0}^{d-1}\mathcal{F}^i(\rho_0))\subseteq\supp(\rho)$. Therefore, $\mathcal{H}_R=\supp(\rho)=\spa B$, and the algorithm is correct.

The complexity comes from three the following parts: (1) it costs $\mathcal{O}(d^{2*2.3727})$ to compute $(I-G/2)^{-1}$
by using Coppersmith-Winograd algorithm \cite{DS90}; (2) it requires $\mathcal{O}(d^{4})$ to obtain $\ket{x}$ from $(I-G/2)^{-1}$;
(3) the Gram-Schmidt orthonormalization is in time $\mathcal{O}(d^{3})$.
So, the time complexity is $\mathcal{O}(d^{4.7454})$ in total. \hfill $\blacksquare$

An advantage of Algorithm 1 is that we can store $(I-G/2)^{-1}$.
Then for any input state $\rho_0$, we only need $\mathcal{O}(d^{4})$
to compute the space reachable from $\rho_0$.

\begin{defn}
The uniformly repeatedly reachable space of program $\mathcal{P}=(\{\E_k:k\in K\}, \{M_0,M_1\})$
starting in the initial state $\rho_0$ is
$$\hs_{URR}=\bigcap_{n\geq 0}\bigvee_{s\in S,j\geq n}\supp~\mathcal{F}_{s[j]}(\rho_0)=\bigcap_{n\geq 0}\bigvee\{\supp~\mathcal{F}_f(\rho_0):f\in S_{fin}, |f|\geq n\}.$$
\end{defn}
The uniformly repeatedly reachable space enjoys the following closed form,
\begin{thm}\label{URR}
$\hs_{URR}=\supp(\sum_{i=d}^{2d-1} \F^i(\rho_0)),$
where $d=\dim \h$, and $\F=\sum_{k\in K}\mathcal{F}_k$.
\end{thm}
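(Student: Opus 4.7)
The plan is to mimic the strategy used for Theorem~\ref{reachablem} but applied ``starting from step~$n$'', and then to show that this family of spaces stabilizes by step~$d$.

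First, I would set $V_i = \supp(\F^i(\rho_0))$ and $Z_n = \bigvee_{i\geq n} V_i$. Using Lemma~\ref{prel-1}(3) repeatedly (as in Theorem~\ref{reachablem}), the second equality in the definition of $\hs_{URR}$ can be rewritten as $\hs_{URR}=\bigcap_{n\geq 0} Z_n$, and $Z_n = \supp(\sum_{i\geq n}\F^i(\rho_0))$ in the sense that the partial sums have an eventually constant support. The next step is a finite characterization of each $Z_n$: applying exactly the proof of Theorem~\ref{reachablem} but with the ``initial'' partial density operator $\F^n(\rho_0)$ in place of $\rho_0$, I obtain
\[
Z_n \;=\; \supp\Bigl(\sum_{i=n}^{n+d-1}\F^i(\rho_0)\Bigr) \;=\; V_n\vee V_{n+1}\vee\cdots\vee V_{n+d-1}.
\]

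Next I would establish that $\F(Z_n) = Z_{n+1}$. The key ingredient is that the image under a super-operator distributes over joins of subspaces, which follows from Lemma~\ref{prel-2}(3) together with Lemma~\ref{prel-1}(3): for orthogonal $X,Y$ one has $P_{X\vee Y}=P_X+P_Y$, giving $\F(X\vee Y)=\F(X)\vee\F(Y)$, and the general case reduces to the orthogonal one by replacing $Y$ by $Y\cap X^{\bot}$. Since $V_{i+1}=\F(V_i)$, combining this with the finite characterization above yields
\[
\F(Z_n)=\F(V_n)\vee\cdots\vee\F(V_{n+d-1})=V_{n+1}\vee\cdots\vee V_{n+d}=Z_{n+1}.
\]

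With $Z_n\supseteq Z_{n+1}$ for all $n$ (immediate from the definition) and $\F(Z_n)=Z_{n+1}$, the stabilization is clean: if $Z_n=Z_{n+1}$ for some $n$, then $Z_{n+1}=\F(Z_n)=\F(Z_{n+1})=Z_{n+2}$, so by induction $Z_m=Z_n$ for all $m\geq n$. Since $\dim Z_0\leq d$ and dimension drops by at least one at every strict inclusion in the chain $Z_0\supsetneq Z_1\supsetneq\cdots$, the chain must stabilize by index $n=d$. Therefore $\bigcap_{n\geq 0} Z_n = Z_d$, which by the finite characterization equals $\supp(\sum_{i=d}^{2d-1}\F^i(\rho_0))$, giving the theorem.

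The main obstacle I anticipate is the algebraic lemma that $\F$ commutes with finite joins of subspaces; this is not explicitly recorded in Lemmas~\ref{prel-1}--\ref{prel-3} and must be justified carefully from Lemma~\ref{prel-2}(3)--(4) and Lemma~\ref{prel-1}(3). Everything else is a straightforward adaptation of the argument for the reachable space together with the elementary observation that a decreasing chain of subspaces in a $d$-dimensional Hilbert space that is closed under $\F$ must stabilize in at most $d$ steps.
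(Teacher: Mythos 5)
Your argument follows essentially the same route as the paper's: the decreasing chain $Z_n$, the recurrence $Z_{n+1}=\F(Z_n)$ (which the paper records as $Z_{n+1}=\supp(\F(P_{Z_n}))$), stabilization within $d$ steps by counting dimensions, and an application of Theorem~\ref{reachablem} to the initial operator $\F^d(\rho_0)$ to get the closed form. One correction to the sub-lemma you flag as the main obstacle: reducing $\F(X\vee Y)=\F(X)\vee\F(Y)$ to the orthogonal case by replacing $Y$ with $Y\cap X^{\bot}$ fails, because $X\vee(Y\cap X^{\bot})$ can be strictly smaller than $X\vee Y$ (take two distinct non-orthogonal lines in the plane, where $Y\cap X^{\bot}=\{0\}$). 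No orthogonality is needed: by Lemma~\ref{prel-1}(3) the operator $P_X+P_Y$ has support $X\vee Y$, so Lemma~\ref{prel-2}(1) and (3) together with linearity give $\F(X\vee Y)=\supp(\F(P_X+P_Y))=\supp(\F(P_X))\vee\supp(\F(P_Y))=\F(X)\vee\F(Y)$, which is exactly the fact the paper uses implicitly.
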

\textit{Proof:} For each $n\geq 0$, we define subspace $Z_n$ as follows: $Z_n:=\bigvee_{j\geq n}\supp~\mathcal{F}^j(\rho_0).$
It is obvious that
$Z_0\supseteq Z_1\supseteq \cdots \supseteq Z_n\supseteq\cdots.$ Suppose $r$ is the smallest integer satisfying $Z_r=Z_{r+1}$. By noting that $Z_{n+1}=\supp(\F(P_{Z_{n}}))$, we can show that $Z_n=Z_r$ for all $n\geq r$. On the other hand, we have $Z_0\supsetneq Z_1\supsetneq\cdots \supsetneq Z_r.$
So, $d_0>d_1>\cdots>d_r\geq 0,$
and $d_i$ is the dimension of subspace $Z_i$ for $0\leq i\leq r$.
Therefore, we have $r\leq d_0\leq d$ and $Z_{d}=Z_r$. Therefore, $\hs_{URR}=\bigcap_{n\geq 0}Z_n=Z_d$.
It is obvious that $Z_d$ is the reachable space starting in state $\mathcal{F}^d(\rho_0)$. Using Theorem~\ref{reachablem} we obtain
$Z_d=\supp(\sum_{i=0}^{d-1}\mathcal{F}^i(\mathcal{F}^d(\rho_0)))=\supp(\sum_{i=d}^{2d-1}\mathcal{F}(\rho_0)).$ \hfill $\blacksquare$
We can give an algorithm computing the uniformly repeatedly reachable space by combining the above theorem and matrix representation of super-operators.
\begin{algorithm}
\caption{Compute uniformly repeatedly reachable space \label{alg:URR}}
\SetKwInOut{Input}{input}\SetKwInOut{Output}{output}
\Input{An input state $\rho_0$, and the matrix representation $G$ of
$\mathcal{G}$}
\Output{An orthonormal basis $B_{URR}$ of $\mathcal{H}_{URR}$.}
$\ket{x}\leftarrow G^d(I-G/2)^{-1}(\rho_0\otimes I)|\Phi\rangle$\;
(* $|\Phi\rangle=\sum_j|j_Aj_B\rangle$ is the unnormalized maximally entangled
state in $\h\otimes\h$  *)\\
\For {$j=1:d$}{
$\ket{y_j}\leftarrow\langle j_B|x\rangle;$
}
\textbf{set of} states $B_{URR}\leftarrow\emptyset$\;
\textbf{integer} $l\leftarrow~0$\; 
\For {$j=1:d$}{
$\ket{z}\leftarrow \ket{y_i}-\sum_{k=1}^{l}\ip{b_k}{y_j}\ket{b_k}$\;
\If{$\ket{z}\neq 0$}{
$l\leftarrow~l+1$\; 
$\ket{b_l}\leftarrow \ket{z}/\sqrt{\ip{z}{z}}$\;
$B_{URR}\leftarrow B_{URR}\cup\{\ket{b_l}\}$\;
}
}
\Return
\end{algorithm}
\begin{thm}Algorithm 2 computes the uniformly repeatedly reachable space in time  $\mathcal{O}(d^{4.7454}\log d)$, where $d=\dim\hs$.\end{thm}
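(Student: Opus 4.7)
The plan is to follow the proof of Theorem~\ref{Al-1}, replacing the reachable space by the uniformly repeatedly reachable space and using the shifted sum identified in Theorem~\ref{URR}. The only algorithmic change is the extra factor $G^d$, which will be computed by repeated squaring and accounts for the additional $\log d$ factor in the running time.

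First I would identify the partial density operator encoded in $|x\rangle$. Set $\sigma=\sum_{i=0}^{\infty} \mathcal{G}^{d+i}(\rho_0)/2^i$. By Lemma~\ref{prel-3}(1) every eigenvalue of $G/2$ has modulus at most $1/2$, so $I-G/2$ is invertible with $(I-G/2)^{-1}=\sum_{i=0}^{\infty}(G/2)^i$, and the series defining $\sigma$ converges. Applying Lemma~\ref{prel-3}(2) to move between $\mathcal{G}$ and $G$ yields
$$|x\rangle=G^d(I-G/2)^{-1}(\rho_0\otimes I)|\Phi\rangle=(\sigma\otimes I)|\Phi\rangle,$$
so $|y_j\rangle=\langle j_B|x\rangle=\sigma|j_A\rangle$ for each $j$. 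The Gram-Schmidt step then produces an orthonormal basis $B_{URR}$ of $\spa\{\sigma|j\rangle : j\}=\supp(\sigma)$.

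Next I would pin down the support identity $\supp(\sigma)=\hs_{URR}$, which I expect to be the main substantive step. Since $\mathcal{G}^i=\mathcal{F}^i/|K|^i$ differs from $\mathcal{F}^i$ only by a positive scalar, the two have identical supports. Dominating $\sigma$ from below by the finite subsum over $d\leq i\leq 2d-1$ and invoking Lemma~\ref{prel-1}(3) gives
$$\supp(\sigma)\supseteq\bigvee_{i=d}^{2d-1}\supp(\mathcal{F}^i(\rho_0))=\supp\Big(\sum_{i=d}^{2d-1}\mathcal{F}^i(\rho_0)\Big)=\hs_{URR},$$
the last equality being Theorem~\ref{URR}. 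For the reverse inclusion I would use that, by the argument inside the proof of Theorem~\ref{URR}, the nested chain $Z_n=\bigvee_{j\geq n}\supp(\mathcal{F}^j(\rho_0))$ stabilizes at $Z_d=\hs_{URR}$; hence $\supp(\mathcal{F}^{d+i}(\rho_0))\subseteq\hs_{URR}$ for every $i\geq 0$, and a further application of Lemma~\ref{prel-1}(3), combined with the fact that supports of an increasing sequence of positive operators in finite dimension must stabilize, delivers $\supp(\sigma)\subseteq\hs_{URR}$.

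Finally, for the complexity I would redo the accounting of Theorem~\ref{Al-1} with the extra $G^d$. Computing $(I-G/2)^{-1}$ costs $\mathcal{O}(d^{2\times 2.3727})$ via the Coppersmith-Winograd algorithm, while $G^d$, a power of the $d^2\times d^2$ matrix $G$, costs $\mathcal{O}(d^{4.7454}\log d)$ by repeated squaring and dominates all remaining work: forming $|x\rangle$ by a matrix-vector product is $\mathcal{O}(d^4)$, and Gram-Schmidt on $d$ vectors in $\hs$ is $\mathcal{O}(d^3)$. The total is thus $\mathcal{O}(d^{4.7454}\log d)$ as claimed.
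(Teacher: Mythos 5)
Your proposal is correct and follows essentially the same route as the paper, which simply declares the theorem a corollary of Theorem~2 with the extra $\log d$ coming from exponentiation by squaring; you have merely filled in the details that the paper leaves implicit (the shifted operator $\sigma=\sum_{i\geq 0}\mathcal{G}^{d+i}(\rho_0)/2^i$, the identity $\supp(\sigma)=\hs_{URR}$ via the stabilized chain $Z_n$, and the cost accounting). No gaps.
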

\textit{Proof:} This theorem is a corollary of Theorem~\ref{Al-1}. Here, $\log d$ in the complexity comes from computing $M^d$ using the method of exponentiation by squaring.

\section{Termination}\label{TER}

Another important problem concerning the behaviour of a program is its termination.
\begin{defn}
Let the program $\mathcal{P}=(\{\E_k:k\in K\}, \{M_0,M_1\})$. Then $\mathcal{P}$ with input $\rho_0$ terminates for execution
path $s\in S$ if $p^s_{n}\rho^s_{n}=\mathcal{F}_{s[n]}=0$ for some positive integer $n$.
\end{defn}

\begin{defn}\begin{enumerate}
\item If a program $\mathcal{P}$ with input $\rho_0$ terminates for all $s\in A$,
then we say that it terminates in schedule $A$.
\item If there is a positive integer $n$ such that $p^s_{n}\rho^s_{n}=0$ for all $s\in A$,
then it is said that the program $\mathcal{P}$ with input $\rho_0$ uniformly terminates in schedule $A$.
\end{enumerate}
\end{defn}

We first prove the equivalence between termination and uniform termination. Of course, this equivalence comes from finiteness of the dimension of the state space.

\begin{thm}\label{termination}
The program $\mathcal{P}=(\{\E_k:k\in K\}, \{M_0,M_1\})$ with initial state $\rho_0$
terminates in the biggest schedule $S=K^{\omega}$ if and only if it uniformly terminates in schedule $S$.
\end{thm}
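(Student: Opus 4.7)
The forward implication, uniform termination implies termination, is immediate from the definitions, so the entire content is the converse: terminating on every path in $S = K^{\omega}$ forces a single step bound that works uniformly. My plan is to prove this by a compactness argument based on K\"onig's lemma, using finite branching of the execution tree.

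First, I would introduce the \emph{survival tree} $T = \{f \in S_{fin} : \mathcal{F}_f(\rho_0) \neq 0\}$, rooted at the empty string. The key structural observation is that $T$ is prefix-closed: if a prefix $g$ of $f$ satisfies $\mathcal{F}_g(\rho_0) = 0$, then writing $f = gh$ gives $\mathcal{F}_f(\rho_0) = \mathcal{F}_h(\mathcal{F}_g(\rho_0)) = 0$, so any $f \in T$ has all prefixes in $T$. Because the set of processes $K$ is finite, each node of $T$ has at most $|K|$ children, i.e.\ $T$ is finitely branching.

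Next, I would translate the termination hypothesis into a non-existence-of-infinite-path statement about $T$. An infinite branch of $T$ is exactly a sequence $s \in S$ with $\mathcal{F}_{s[n]}(\rho_0) \neq 0$ for every $n$, i.e.\ a path on which the program fails to terminate. By assumption no such $s$ exists, so $T$ has no infinite branch. K\"onig's lemma then implies $T$ is finite, so there is $N$ such that every $f \in T$ has length less than $N$. Consequently, for any $s \in S$ we have $s[N] \notin T$, which means $p^s_N \rho^s_N = \mathcal{F}_{s[N]}(\rho_0) = 0$; this is uniform termination in $S$ with bound $N$.

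The only delicate point, and the place where the argument genuinely uses the quantum structure, is the prefix-closedness: one has to be sure that once the partial density operator drops to zero it cannot ``revive'' under later $\mathcal{F}_k$'s. This is guaranteed because each $\mathcal{F}_k$ is linear (in fact a composition of the linear operation $M_1 \cdot M_1^\dagger$ with a super-operator), so $\mathcal{F}_k(0) = 0$. With that in hand, the rest is purely combinatorial and does not use the finite dimension $d = \dim \h$ of $\h$ at all --- finiteness of $K$ is what makes K\"onig's lemma applicable, and finite-dimensionality of $\h$ is only used implicitly through the fact that ``state becomes $0$'' is a meaningful definition of termination in this model.
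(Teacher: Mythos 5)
Your proof is correct, and it takes a genuinely different route from the paper's. You argue by compactness: the survival tree $T=\{f\in S_{fin}:\mathcal{F}_f(\rho_0)\neq 0\}$ is prefix-closed (since each $\mathcal{F}_k$ is linear, hence $\mathcal{F}_k(0)=0$) and finitely branching, so K\"onig's lemma converts ``no infinite non-terminating branch'' into ``$T$ is finite,'' which is exactly uniform termination. The paper instead works with the increasing chain of termination subspaces $X_n=\{\ket{\psi}:\F^n(\psi)=0\}$ of $\hs$, shows by a dimension count that this chain stabilizes after at most $d=\dim\hs$ steps, and then lifts the single-process case to $|K|\geq 2$ by an explicit extraction of an infinite surviving path. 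The trade-off is this: your argument is shorter, purely combinatorial, and does not use finite-dimensionality of $\hs$ at all (only finiteness of $K$), so it is in a sense more general; but it is non-quantitative --- K\"onig's lemma gives you \emph{some} bound $N$ with no control on its size, whereas the paper's proof establishes the explicit bound $N=d$. That explicit bound is not idle: it is precisely what is invoked later in step (1) of the proof of Theorem~\ref{ftermination} (to assert $(\sum_{f\in\Pi}\mathcal{F}_f)^d(\rho_0)=0$) and what makes Algorithm~3 (which computes $M^d$) a decision procedure with a stated complexity. So your proof suffices for the theorem as literally stated, but if you want it to support the rest of the paper you would still need to supplement it with the dimension argument, or restate the theorem to include the bound $d$.
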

\begin{proof} The \textquotedblleft if\textquotedblright\ part is obvious. We prove the  \textquotedblleft only if\textquotedblright\ part in two steps:

(1) We consider the case of $|K|=1$, where $\{\E_k:k\in K\}$ is a singleton $\{\E\}$. Now the program is indeed a sequential program, and it is a quantum loop~\cite{YF10}. We
write $\F(\rho)=\E (M_1\rho M_1^{\dag})$ for all $\rho$. What we need to prove is that if $\mathcal{P}$ terminates, i.e., $\F^n(\rho_0)=0$ for some $n$,
then it terminates within $d$ steps, i.e., $\F^d(\rho_0)=0$. If $\rho_0$ is a pure state $|\psi\rangle$, then we define the termination sets as follows: $X_n:=\{\ket{\psi}:\F^n(\psi)=0\}$ for each integer $n>0$.

(1.1) If $\ket{\varphi},\ket{\chi}\in X_n$, then $\F^n(\varphi+\chi)=0$,
which leads to $\alpha\ket{\varphi}+\beta\ket{\chi}\in X_n$ for any $\alpha,\beta
\in \mathbb{C}$. Thus $X_n$ is a subspace of $\hs$.

(1.2) Since $\F^n(\psi)=0\Rightarrow \F^{n+1}(\psi)=0$,
it holds that that $X_n\subseteq X_{n+1}$ for any $n>0$.
So, we have the inclusion relation $X_1\subseteq X_2\subseteq \cdots \subseteq X_n\subseteq\cdots.$

Now suppose $t$ is the smallest integer satisfying $X_t=X_{t+1}$.
Invoking Lemma \ref{prel-2}, we obtain that
 $\supp({\F^{\ast}}^t(I))=X_t^{\bot}=X_{t+1}^{\bot}=\supp({\F^{\ast}}^{t+1}(I)),$
 where $\F^{\ast}(\cdot)$ denotes the (Schr\"odinger-Heisenberg) dual of $\F(\cdot)$.
We have $\supp({\F^{\ast}}^{n}(I))=\supp({\F^{\ast}}^{t}(I))$,
which leads to $X_n=X_{t}$ for all $n\geq t$. Now, it holds that
$X_1\subsetneq X_2\subsetneq \cdots \subsetneq X_t=X_{t+1}=X_{t+2}=\cdots.$
This implies $d_1<d_2\cdots< d_t,$
where $d_i$ is the dimension of subspace $X_i$.
Thus, $t\leq d$. If $\F^n(\psi)=0$, then $\ket{\psi}\in X_n\subseteq X_d$, and $\F^d(\psi)=0$.

In general, if $\rho_0$ is a mixed input state $\rho_0=\sum p_i \op{\psi_i}{\psi_i}$ with all $p_i>0$,
and $\F^n(\rho_0)=0$, then $\F^n(\psi_i)=0$ for all $i$.
Therefore,  $\F^d(\psi_i)=0$ for all $i$,
and it follows immediately that $\F^d(\rho_0)=0$.

(2) For the general case of $|K|\geq 2$, we assume that $\mathcal{P}$ starting in $\rho_0$ terminates in $S$, i.e.,
for any $s\in S$, there exists an integer $n_s$ such that $\mathcal{F}_{s[n_s]}(\rho_0)=0$.
Our purpose is to show that there exists an integer $n$ such that $\mathcal{F}_{s[n]}(\rho_0)=0$ for all $s\in S$.
Indeed, we can choose $n=d$. We do this by refutation.
Assume that $\mathcal{F}_{s[d]}(\rho_0)\neq 0$ for some $s\in S$.
We are going to construct an execution path $s\in S$ such that $\mathcal{F}_{s[n]}(\rho_0)\neq 0$ for any $n\geq 0$.
Let $\F=\sum_{k\in K} \mathcal{F}_k$. Then the assumption means that
there exist $f\in K^{d}$ such that $\mathcal{F}_f(\rho_0)\neq 0$,
and it follows that $\mathcal{F}^d(\rho_0)\neq 0.$
Now we consider the loop program $(\{\F\},\{M_0,M_1\})$ with initial state $\rho_0$.
Applying (1) to it, we obtain $\mathcal{F}^{2d}(\rho_0)\neq 0$.
Then there exist $g_1,h_1\in K^{d}$ such that $\mathcal{F}_{h_1}(\mathcal{F}_{g_1}(\rho_0))=\mathcal{F}_{g_1h_1}(\rho_0)\neq 0$, and $\mathcal{F}^{d}(\mathcal{F}_{g_1}(\rho_0))\neq 0$.
Applying (1) again leads to $\mathcal{F}^{2d}(\mathcal{F}_{g_1}(\rho_0))\neq0$,
which means that there exist $h_2,g_2 \in K^{2d}$ such that $\mathcal{F}_{h_2}(\mathcal{F}_{g_1g_2}(\rho_0))=\mathcal{F}_{g_2h_2}(\mathcal{F}_{g_1}(\rho_0))\neq0$.
Thus, we have $\mathcal{F}^d(\mathcal{F}_{g_2g_1}(\rho_0))\neq 0$.
Repeating this procedure, we can find an infinite sequence $g_1,g_2,... \in K^d$.
Put $s=g_1g_2...\in S$. Then it holds that $\mathcal{T}_{s[kd]}(\rho_0)\neq 0$ for any integer $k$.
Thus, we have $\mathcal{T}_{s[n]}(\rho)\neq0$ for all $n$. \hfill $\blacksquare$\end{proof}

Now we are ready to consider termination under fairness.
Of course, any  permutation of $K$ is a fair piece.
We write $P_K$ for the set of permutations of $K$.
For $\sigma=s_1s_2\cdots s_m\in P_K$,
a finite execution path of the form $s_1\sigma_1s_2\sigma_2\cdots \sigma_{m-1}s_m$
is called an expansion of $\sigma$.
Obviously, for any $\sigma\in P_K$, all of its expansions are in $F_{fin}$.
We will use a special class of fair pieces generated by permutations:
\begin{equation*}
\Pi=\{s_1\sigma_1s_2\sigma_2\cdots \sigma_{m-1}s_m: s_1s_2\cdots s_m\in P_K\ {\rm and}\ |\sigma_i|<d\ {\rm for\ every}\ 1\leq i<m\},
\end{equation*}
where $d$ is the dimension of the Hilbert space $\mathcal{H}$ of program states.
It is easy to see that $\Pi\subsetneq F_{fin}$.

\begin{thm}\label{ftermination}
A program $\mathcal{P}=(\{\E_k:k\in K\}, \{M_0,M_1\})$ with initial state $\rho_0$
terminates in the fair schedule $F$ if and only if it terminates in the schedule $\Pi^{\omega}$.
\end{thm}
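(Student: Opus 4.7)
The forward direction is immediate: since $\Pi\subseteq F_{fin}$, we have $\Pi^\omega\subseteq F$, so termination on $F$ already forces termination on $\Pi^\omega$. For the reverse direction, the plan is to combine Theorem~\ref{termination} (applied to an auxiliary program whose alphabet is $\Pi$) with the reachable-space bound of Theorem~\ref{reachablem} (any gap inside a permutation can be shortened below $d$ without losing support information).

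\emph{Step 1 (uniform termination on $\Pi^\omega$).} Because $\Pi$ is finite, consider the auxiliary program $\mathcal{P}'=(\{\mathcal{F}_\pi:\pi\in\Pi\},\{0,I\})$, whose biggest schedule is exactly $\Pi^\omega$. The proof of Theorem~\ref{termination} only relies on CP and trace non-increasing super-operators, so it applies to $\mathcal{P}'$ and yields an integer $N$ with $\mathcal{F}_{\pi_1\cdots\pi_N}(\rho_0)=0$ for every choice of $\pi_1,\ldots,\pi_N\in\Pi$.

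\emph{Step 2 (key lemma).} I claim that for any fair pieces $f_1,\ldots,f_N\in F_{fin}$,
$$\supp\mathcal{F}_{f_1\cdots f_N}(\rho_0)\ \subseteq\ \bigvee_{\pi_1,\ldots,\pi_N\in\Pi}\supp\mathcal{F}_{\pi_1\cdots\pi_N}(\rho_0).$$
The key observation is a gap-replacement inequality: for any partial density operator $\tau$, any $k\in K$ and any $\sigma\in K^*$,
$$\supp\mathcal{F}_{\sigma k}(\tau)\ \subseteq\ \bigvee_{|\sigma'|<d}\supp\mathcal{F}_{\sigma' k}(\tau),$$
which follows because the reachable space from $\tau$ is $\bigvee_{|\sigma'|<d}\supp\mathcal{F}_{\sigma'}(\tau)$ by Theorem~\ref{reachablem} (whose proof works for any starting partial density operator), and Lemma~\ref{prel-2}(1) transfers this through $\mathcal{F}_k$. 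Applying this to each of the $m-1$ internal gaps of a single fair piece $f=s_1\sigma_1 s_2\cdots\sigma_{m-1}s_m$ and propagating the inclusions outward via the monotonicity, distributivity and commutation properties in Lemmas~\ref{prel-1}(3) and~\ref{prel-2}(1)(4) gives $\supp\mathcal{F}_f(\tau)\subseteq\bigvee_{\pi\in\Pi}\supp\mathcal{F}_\pi(\tau)$. A straightforward induction on $N$, feeding each such inclusion into the next fair piece applied to the resulting sum of states, extends it to the full claim.

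\emph{Step 3 (conclude).} Given any $s\in F$, decompose $s=f_1 f_2\cdots$ into fair pieces by repeatedly taking the shortest prefix of the remaining tail that visits every process. Combining the key lemma with Step~1 yields $\supp\mathcal{F}_{f_1\cdots f_N}(\rho_0)\subseteq\{0\}$, hence $\mathcal{P}$ terminates on $s$ within the first $N$ fair pieces. The main obstacle is Step~2: the inductive bookkeeping of supports through nested super-operator compositions must carefully exploit $\mathcal{F}(\supp\tau)=\supp\mathcal{F}(\tau)$ and the distribution of $\supp(\cdot)$ over sums in order to turn the gap-replacement for a single $\sigma$ into a clean statement about $N$-fold concatenations of fair pieces.
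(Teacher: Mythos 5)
Your proposal is correct and follows essentially the same route as the paper: build the auxiliary program over the finite alphabet $\Pi$ and apply Theorem~\ref{termination} to get uniform termination there, then use Theorem~\ref{reachablem} (for an arbitrary starting partial density operator) together with Lemma~\ref{prel-1}(3) and Lemma~\ref{prel-2}(1) to replace each gap inside a fair piece by gaps of length $<d$, and finally induct over the first $N$ fair pieces of a fair path. The only cosmetic difference is that you phrase the gap-replacement as a join over short strings where the paper packages it as the single super-operator $\mathcal{G}=\sum_{i=0}^{d-1}\F^i$ and the identity $\mathcal{G}_{\sigma_0}=\mathcal{F}_{s_m}\circ\mathcal{G}\circ\cdots\circ\mathcal{G}\circ\mathcal{F}_{s_1}$.
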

\begin{proof} The \textquotedblleft only if\textquotedblright\ part is clear because $\Pi^{\omega}\subseteq F$.
To prove the \textquotedblleft if\textquotedblright\ part, assume $\mathcal{P}$ terminates in the schedule $\Pi^{\omega}$. We proceed in four steps:

(1) Since $\Pi$ is a finite set, we can construct a new program $\mathcal{P}^\prime=(\{\mathcal{F}_f:f\in\Pi\},\{0,I\})$. (We should point out that $\mathcal{F}_f$ is usually not trace-preserving, and thus $\mathcal{P}^\prime$ is indeed not a program in the sense of Definition~\ref{p-def}. However, this does not matter for the following arguments.) It is easy to see that the termination of $\mathcal{P}$ with $\rho_0$ in schedule $\Pi^\omega$ implies the termination of $\mathcal{P}^\prime$ with $\rho_0$ in $\Pi^\omega$. Note that $\Pi^\omega$ is the biggest schedule in $\mathcal{P}^\prime$, although it is not the biggest schedule in $\mathcal{P}$. So, we can apply Theorem~\ref{termination} to $\mathcal{P}^\prime$ and assert that
$(\sum_{f\in \Pi} \mathcal{F}_f)^d(\rho_0)=0.$
That is equivalent to
\begin{equation}\label{ft-1}\supp[(\sum_{f\in \Pi} \mathcal{F}_f)^d(\rho_0)]=\{0\}\ (0-{\rm dimensional\ subspace}).\end{equation}

(2) For each $\sigma\in P_K$, we set $A_\sigma=\{\sigma^\prime\in\Pi: \sigma^\prime\ {\rm ~is~an~expansion~of}~\sigma\}.$
Then $\bigcup\limits_{\sigma\in P_K} A_\sigma=\Pi$. Moreover, we write $\mathcal{G}_{\sigma}=\sum_{f\in A_{\sigma}} \mathcal{F}_f$ for every $\sigma\in P_K$. It is worth noting that $\sum_{\sigma\in P_K}\mathcal{G}_\sigma=\sum_{f\in\Pi}\mathcal{F}_f$ is not true in general because it is possible that $A_{\sigma_1}\cap A_{\sigma_2}\neq\emptyset$ for different $\sigma_1$ and $\sigma_2$. But by Lemma~\ref{prel-1}.1 and 3 we have
$$\supp[(\sum_{\sigma\in P_K}\mathcal{G}_{\sigma})(\rho_0)]=\supp[(\sum_{f\in \Pi} \mathcal{F}_f)(\rho_0)],$$ and furthermore, it follows from Eq.~(\ref{ft-1}) that
\begin{equation}\label{ft-2}\supp[(\sum_{\sigma\in P_K} \mathcal{G}_\sigma)^d(\rho_0)]=\supp[(\sum_{f\in \Pi} \mathcal{F}_f)^d(\rho_0)]=\{0\}.\end{equation}

(3) For each fair piece $\sigma^\prime\in F_{fin}$, and for any $\rho$, we can write $\sigma^\prime=s_1f_1s_2\cdots$ $s_{m-1}f_{m-1}s_m$ for some $\sigma_0=s_1s_2\cdots s_m\in P_K$, and $f_1,...,f_{m-1}\in S_{fin}$. Furthermore, we write
$\mathcal{G}=\sum_{i=0}^{d-1} (\sum_{k=1}^m \mathcal{F}_k)^i$. First, a routine calculation leads to
$\mathcal{G}_{\sigma_0}=\mathcal{F}_{s_m}\circ\mathcal{G}\circ\mathcal{F}_{s_{m-1}}\cdots\mathcal{F}_{s_2}\circ\mathcal{G}\circ\mathcal{F}_{s_1}.$ Second, it follows from Theorem \ref{reachablem}  that for each $1\leq i\leq m-1$, and for any $\rho$, $\supp(\mathcal{F}_{f_i}(\rho))\subseteq \supp(\mathcal{G}(\rho)).$ Repeatedly applying this inclusion together with Lemma~\ref{prel-2}.1 we obtain
\begin{equation}\label{ft-3}\begin{split}
\supp(\mathcal{F}_{\sigma^\prime}(\rho))
=&\ \supp[(\mathcal{F}_{s_m}\circ\mathcal{F}_{f_{m-1}}\circ\mathcal{F}_{s_{m-1}}\circ\cdots\circ\mathcal{F}_{s_2}\circ\mathcal{F}_{f_1}\circ\mathcal{F}_{s_1})(\rho)]\\
\subseteq&\ \supp[(\mathcal{F}_{s_m}\circ\mathcal{G}\circ\mathcal{F}_{s_{m-1}}\circ\cdots\circ\mathcal{F}_{s_2}\circ\mathcal{G}\circ\mathcal{F}_{s_1})(\rho)] \\
=&\ \supp(\mathcal{G}_{\sigma_0}(\rho))\subseteq\supp(\sum_{\sigma\in \Pi}\mathcal{F}_\sigma)(\rho).\end{split}
\end{equation}

(4) Now we are able to complete the proof by showing that for any fair execution path $s\in F$, $s$ has an initial segment $t$ such that $\mathcal{F}_t(\rho_0)=0$. In fact, $s$ can be written as an infinite sequence of fair piece, i.e., $s=\sigma^\prime_1\sigma^\prime_2\cdots$,
where each $\sigma^\prime_i$ is a fair piece. We take $t$ to be the initial segment of $s$ containing the first $d$ fair pieces, i.e.,
$t=\sigma_1\sigma_2\cdots\sigma_d$. Repeatedly applying Eq.~(\ref{ft-3}) and Lemma~\ref{prel-2}.1 we obtain
\begin{equation*}\begin{split}\supp{\mathcal{F}_t(\rho_0)}=&\ \supp[(\mathcal{F}_{\sigma^\prime_{d}}\circ\cdots\circ\mathcal{F}_{\sigma^\prime_{2}}\circ\mathcal{F}_{\sigma^\prime_{1}})(\rho_0)]\\ \subseteq &\ \supp[(\sum_{p\in \Pi}\mathcal{F}_p)^d(\rho)]=\{0\}.\end{split}\end{equation*}
Thus, $\mathcal{F}_t(\rho)=0$. \hfill $\blacksquare$\end{proof}

The above theorem can be slightly strengthened by employing the notion of fairness index in Definition~\ref{ind-def}.
First, we have:
\begin{lem}\label{ind-lem} $\Pi^{\omega}\subsetneq F_{\frac{1}{md}}.$
\end{lem}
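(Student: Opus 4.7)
The plan is to treat the containment $\Pi^{\omega}\subseteq F_{1/(md)}$ and its strictness in turn, both reducing to elementary counting once the right length bound on fair pieces is isolated.

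For the containment, the key combinatorial observation is that every $f=s_1\sigma_1 s_2\sigma_2\cdots\sigma_{m-1}s_m\in\Pi$ has length $|f|\le m+(m-1)(d-1)=md-d+1$, while simultaneously containing each $k\in K$ at least once because $s_1s_2\cdots s_m$ is a permutation of $K$. Given $s\in\Pi^{\omega}$, I fix a decomposition $s=f_1f_2f_3\cdots$ with each $f_j\in\Pi$ and, for each $n$, let $N(n)$ be the largest integer such that $|f_1f_2\cdots f_{N(n)}|\le n$. These two facts give $s(n,k)\ge N(n)$ for every $k\in K$ together with $n<(N(n)+1)(md-d+1)$, so $s(n,k)/n\ge 1/(md-d+1)-1/n$. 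Taking the $\liminf$ as $n\to\infty$ yields $f(s)\ge 1/(md-d+1)>1/(md)$ (using $d\ge 2$, which is the only non-degenerate case), hence $s\in F_{1/(md)}$.

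For strictness, I would exhibit an explicit fair execution path in $F_{1/(md)}\setminus\Pi^{\omega}$, for instance $s=1^{md}(1\,2\,\cdots\,m)^{\omega}$. A direct frequency count gives $\lim_{n}s(n,k)/n=1/m$ for every $k\in K$, whence $f(s)=1/m>1/(md)$ and $s\in F_{1/(md)}$. On the other hand, if $s\in\Pi^{\omega}$, then its first fair piece $f_1\in\Pi$ is a prefix of $s$ of length $|f_1|\le md-d+1<md$, which forces $f_1$ to consist entirely of the letter $1$; this contradicts the requirement that every element of $\Pi$ contain each process in $K$ (given $m\ge 2$). Hence $s\notin\Pi^{\omega}$, completing the proof.

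The only real work is the length bound $|f|\le md-d+1$ and the observation that $md-d+1<md$ when $d\ge 2$; no quantum-theoretic tools enter, and the limiting argument is routine. The main obstacle, such as it is, is purely bookkeeping in the length bound and in making the inequality strict. The degenerate cases $m=1$ or $d=1$ lie outside the natural scope of concurrent programs on non-trivial Hilbert spaces and can be excluded at the outset.
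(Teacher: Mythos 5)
Your proof is correct and takes essentially the same route as the paper's: the containment follows from the length bound on elements of $\Pi$ together with the fact that each such piece contains every process, and strictness is witnessed by prepending $1^{md}$ to a fair path. You are in fact more explicit than the paper on two points it glosses over --- the $\liminf$ computation giving $f(s)\ge 1/(md-d+1)>1/(md)$, and the reason $1^{md}s\notin\Pi^{\omega}$ --- and you correctly flag the degenerate cases $m=1$, $d=1$ where the strict inclusion would fail.
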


\begin{proof} For any $s=\sigma_1\sigma_2\cdots\in \Pi^{\omega}$ with $\sigma_i\in \Pi$,
we know that $\sigma_i(|\sigma_i|,k)\geq 1$ for any $k\in K$ and $|\sigma_i|<md$,
where $\sigma_i(|\sigma_i|,k)$ is the number of occurrences of $k$ in $\sigma_i$.
Then the occurrence frequency $f(s)>\frac{1}{md}$, which means that $\Pi^{\omega}\subseteq F_{\frac{1}{md}}$.
On the other hand, we choose an arbitrary $s\in F_{\frac{1}{md}}$.
Then $1^{md}s\in F_{\frac{1}{md}}$ but $1^{md}s\notin \Pi^{\omega}$.
\hfill $\blacksquare$
\end{proof}

Actually, what we proved in Theorem~\ref{ftermination} is that for any two schedules $A, B$ between $\Pi^\omega$ and $F$, i.e., $\Pi^{\omega}\subset A,B\subset F$, a program terminates in schedule
$A$ if and only if it terminates in schedule $B$ . Combining Theorem~\ref{ftermination} and Lemma~\ref{ind-lem} yields:
\begin{cor}
For any $0\leq \delta,\epsilon\leq\frac{1}{md}$,
a program terminates in schedule $F_{\delta}$ if and only if it terminates in schedule $F_{\epsilon}$. \hfill $\blacksquare$
\end{cor}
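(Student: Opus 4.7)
The plan is to combine Theorem~\ref{ftermination} and Lemma~\ref{ind-lem} with an easy monotonicity principle for termination, and to show that for every $\delta\in[0,1/(md)]$ the schedule $F_\delta$ sits between $\Pi^\omega$ and $F$. Termination in $F_\delta$ must then agree with termination in $F$, and the corollary follows by applying this to both $\delta$ and $\epsilon$.

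First I would record the monotonicity principle: if $A\subseteq B$ are two schedules and $\mathcal{P}$ with input $\rho_0$ terminates in $B$, then it also terminates in $A$. This is immediate since every execution path in $A$ is also in $B$, and so an initial segment witnessing $\mathcal{F}_{s[n]}(\rho_0)=0$ is already available for each $s\in A$. Next I would verify the sandwich $\Pi^\omega\subseteq F_\delta\subseteq F$ for every $\delta\in[0,1/(md)]$. From the definition, $F_\delta=\{s:f(s)>\delta\}$ shrinks as $\delta$ grows, so $\delta\leq 1/(md)$ gives $F_{1/(md)}\subseteq F_\delta$; combined with Lemma~\ref{ind-lem}'s inclusion $\Pi^\omega\subseteq F_{1/(md)}$, this yields $\Pi^\omega\subseteq F_\delta$. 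The inclusion $F_\delta\subseteq F$ is clear since every path with positive fairness index is fair.

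Putting the pieces together: by monotonicity, termination in $F$ implies termination in $F_\delta$, which in turn implies termination in $\Pi^\omega$; and by Theorem~\ref{ftermination}, termination in $\Pi^\omega$ implies termination in $F$. Hence termination in $F_\delta$ is equivalent to termination in $F$ for every admissible $\delta$, and in particular it agrees with termination in $F_\epsilon$ for any other $\epsilon\in[0,1/(md)]$. There is no genuine obstacle here; the only mildly delicate point is the inverse relationship between the subscript and containment (the family $\{F_\delta\}$ decreases in $\delta$), which must be handled correctly when invoking Lemma~\ref{ind-lem}, but after that the argument is a straightforward chase of inclusions.
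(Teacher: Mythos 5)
Your proposal is correct and follows essentially the same route as the paper: the remark preceding the corollary observes that Theorem~\ref{ftermination} plus monotonicity makes termination equivalent for any schedule sandwiched between $\Pi^{\omega}$ and $F$, and Lemma~\ref{ind-lem} together with the fact that $F_{\delta}$ decreases in $\delta$ places every $F_{\delta}$ with $0\leq\delta\leq\frac{1}{md}$ in that sandwich. Your explicit handling of the inverse relationship between the subscript and containment is exactly the point the paper leaves implicit.
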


Now an algorithm checking termination of a concurrent quantum program can be developed based on Theorem~\ref{termination}.

\smallskip\
\begin{algorithm}
\caption{Decide termination of a concurrent quantum program \label{alg:TF}}
\SetKwInOut{Input}{input}\SetKwInOut{Output}{output}
\Input{An input state $\rho_0$, and the matrix representation of each $\F_i$
$i.e, N_i$}
\Output{b.(If the program terminates under $F$, $b=0$; Otherwise, $b=1$.)}
$N\leftarrow 0$\;
\For{$k=1:m$}{
$N\leftarrow N_i+N$\;
}
$G\leftarrow I$\;
\For{$k=1:d-1$}{
$G\leftarrow I+NG$\;
}
(*Compute the matrix representation of $\mathcal{G}$*)\\
$M\leftarrow 0$\;
Generate $P_K$\;
\For{$p=p_1p_2\cdots p_m\in P_K$}{
$L \leftarrow N_{p_1}$\;
\For{$l=2:m$}{
$L\leftarrow N_{p_l}GL$\;
}
(*Compute the matrix representation of $\mathcal{F}_p$*)\\
$M\leftarrow M+L$\;
}
(*Compute the matrix representation of $\sum_{p\in P_K}\mathcal{F}_p$*)\\
$\ket{x}\leftarrow M^d(\rho_0\otimes I)|\Phi\rangle$\;
\If{$\ket{x}\neq 0$}{
$b \leftarrow 1$;
}
\If{$\ket{x}= 0$}{
$b \leftarrow 0$;
}
\Return b
\end{algorithm}

\begin{thm}Algorithm 3 decides termination of a concurrent quantum program in time $\mathcal{O}(m^m~d^{4.7454})$, where $m$ is the number of the processes, and $d=\dim\hs.$\end{thm}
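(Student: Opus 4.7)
The plan is to establish correctness by unwinding the algorithm's matrix construction through the matrix representation of super-operators (Lemma~\ref{prel-3}), and then to read the cost off line by line, invoking Theorem~\ref{ftermination} for the semantic content.

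First I would verify that each intermediate matrix in Algorithm~3 is the matrix representation of the corresponding super-operator. Because the map from super-operators to matrix representations is linear in Kraus operators, after the first loop $N=\sum_{k}N_{k}$ represents $\mathcal{F}=\sum_{k\in K}\mathcal{F}_k$. A direct induction on the Horner-style update $G\leftarrow I+NG$ shows that after $d-1$ iterations, $G$ represents $\mathcal{G}=\sum_{i=0}^{d-1}\mathcal{F}^i$, which is precisely the auxiliary super-operator used in step~(3) of the proof of Theorem~\ref{ftermination}. Then, for each permutation $p=p_1p_2\cdots p_m\in P_K$, the inner loop computes $L=N_{p_m}G\cdots GN_{p_1}$; since matrix multiplication of matrix representations corresponds to composition of super-operators (in the same order), $L$ represents $\mathcal{F}_{p_m}\circ\mathcal{G}\circ\cdots\circ\mathcal{G}\circ\mathcal{F}_{p_1}=\mathcal{G}_p$ in the notation of Theorem~\ref{ftermination}. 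Summing over $p$ yields $M$ representing $\sum_{\sigma\in P_K}\mathcal{G}_\sigma$, and hence $M^d$ represents $(\sum_{\sigma\in P_K}\mathcal{G}_\sigma)^d$.

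Next I would apply Lemma~\ref{prel-3}(2) to the vector $\ket{x}=M^d(\rho_0\otimes I)\ket{\Phi}$, so that $\ket{x}=\bigl[(\sum_{\sigma\in P_K}\mathcal{G}_\sigma)^d(\rho_0)\otimes I\bigr]\ket{\Phi}$. Since $\ket{\Phi}$ is maximally entangled, $\ket{x}=0$ iff $(\sum_{\sigma\in P_K}\mathcal{G}_\sigma)^d(\rho_0)=0$. By Eq.~(\ref{ft-2}) combined with Theorem~\ref{ftermination}, this vanishing is equivalent to termination of $\mathcal{P}$ with input $\rho_0$ in the fair schedule $F$. Therefore the boolean output $b$ is correct.

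For the running time, every matrix manipulated has size $d^2\times d^2$, so a single multiplication costs $\mathcal{O}(d^{4.7454})$ by Coppersmith--Winograd. Computing $N$ costs $\mathcal{O}(md^4)$; computing $G$ takes $d-1$ multiplications, i.e.\ $\mathcal{O}(d\cdot d^{4.7454})$; the outer loop iterates $|P_K|=m!$ times and each pass performs $2m-1$ multiplications, giving $\mathcal{O}(m\cdot m!\cdot d^{4.7454})=\mathcal{O}(m^m\, d^{4.7454})$; finally $M^d$ by repeated squaring contributes $\mathcal{O}(\log d\cdot d^{4.7454})$. The permutation loop dominates, giving the claimed bound $\mathcal{O}(m^m\, d^{4.7454})$. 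The main obstacle is purely bookkeeping: one must be careful that the iterative Horner update really produces $I+N+\cdots+N^{d-1}$ and that the left-to-right matrix product encodes the intended composition order of super-operators; once those identifications are made, the semantics is handed over entirely to Theorem~\ref{ftermination}.
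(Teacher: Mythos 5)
Your proposal is correct and follows essentially the same route as the paper: identify each intermediate matrix with the matrix representation of the corresponding super-operator (in particular $M$ with $\sum_{\sigma\in P_K}\mathcal{G}_\sigma$), reduce correctness to Theorem~\ref{ftermination} via the vanishing test on $M^d(\rho_0\otimes I)\ket{\Phi}$, and charge $\mathcal{O}(d^{4.7454})$ per $d^2\times d^2$ multiplication. Your only loose point --- asserting that the permutation loop dominates, even though the Horner loop for $G$ contributes $\mathcal{O}(d\cdot d^{4.7454})$, which need not be smaller when $d>m^m$ --- is an imprecision the paper's own proof shares (it actually concludes $\mathcal{O}((m^m+d)d^{4.7454})$).
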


\textit{Proof:} In the algorithm, we use the for loop to compute the matrix representation $G$
of $\mathcal{G}=\sum_{i=0}^{d-1} (\sum_{k=1}^m \mathcal{F}_k)^i$.
Then the matrix representation of $\mathcal{F}_\sigma=\mathcal{F}_{s_1}\circ\mathcal{G}\circ\mathcal{F}_{s_2}\cdots\mathcal{G}\circ\mathcal{F}_{s_m}(\cdot)$
is obtained for any $\sigma=s_1s_2\cdots s_m\in P_K$.
All $\mathcal{F}_\sigma$s are added up to $M$. Then $M$ becomes the matrix representation of $\sum_{\sigma\in P_K}\mathcal{F}_\sigma$.
Consequently, we can apply Theorem \ref{ftermination} to assert that this algorithm outputs 0 if the program terminates in the fair schedule $F$; otherwise, 1.

To analyse its complexity, the algorithm can be divided into three steps: (1) Computing $G$ costs $\mathcal{O}(m+ d~d^{2*2.3727})=\mathcal{O}(m+d^{5.7454})$;
(2) Computing $M$ costs $m!*2m*\mathcal{O}(d^{2*2.3727})=\mathcal{O}(m^m~d^{4.7454})$;
(3) Computing $\ket{x}$ costs $\mathcal{O}(d^{4.7454}\log d)$.
So the total cost is $\mathcal{O}((m^m+d)d^{4.7454})$.\hfill $\blacksquare$
\section{Conclusion}\label{CCO}
In this paper, we studied two of the central problems, namely, reachability and termination for concurrent quantum programs. A concurrent quantum program is modeled by a family of quantum Markov chains sharing a state Hilbert space and a termination measurement, with each chain standing for a participating process. This model extends Hart, Sharir and Pnueli's model of probabilistic concurrent programs~\cite{HSP83} to the quantum setting. We show that the reachable space and the uniformly repeatedly reachable space of a concurrent quantum program can be computed and its termination can be decided in time $\mathcal{O}(d^{4.7454})$, $\mathcal{O}(d^{4.7454}\log d)$, $\mathcal{O}((m^m+d)d^{4.7454})$, respectively, where $m$ is the number of participating processes, and $d$ is the dimension of state space.

For further studies, an obvious problem is: how to improve the above algorithm complexities? In this paper, reachability and termination of quantum programs were defined in a way where probabilities are abstracted out; that is, only reachability and termination with certainty are considered. A more delicate, probability analysis of the reachability and termination is also an interesting open problem. The algorithms for computing the reachable space and checking termination of a \textit{quantum} program presented in this paper are all algorithms for \textit{classical} computers. So, another interesting problem is to find efficient \textit{quantum} algorithms for reachability and termination analysis of a quantum program.

\section*{Acknowledgment}
We are grateful to Dr Yangjia Li, Runyao Duan and Yuan Feng for useful discussions.
This work was partly supported by the Australian Research Council (Grant No. DP110103473).


\begin{thebibliography}{99}

\bibitem{CE99} J. I. Cirac, A. K. Ekert, S. F. Huelga and C. Macchiavello, Distributed quantum computation over noisy channels, \textit{Physical
Review A}  59(1999)4249-4254.

\bibitem{DS90} C. Don and W. Shmuel, Matrix multiplication via arithmetic progressions, \textit{Journal of Symbolic Computation} 9(1990)251-280.

\bibitem{DAV11} T. A. S. Davidson, \textit{Formal Verification Techniques using Quantum Process Calculus}, Ph.D. thesis,
University of Warwick, 2011.

\bibitem{DG11} T. Davidson, S. Gay, R. Nagarajan and I. V. Puthoor, Analysis of a quantum error correcting code using quantum process calculus, \textit{Proceedingds of QPL 2011,
the 8th Workhop on Quantum Physics and Logic}, pp. 107-120.

\bibitem{DP06} E. D'Hondt and P. Panangaden, Quantum weakest preconditions, \textit{Mathematical Structures in Computer Science} 16(2006)429-451.

\bibitem{FDJY07} Y. Feng, R. Y. Duan, Z. F. Ji and M. S. Ying, M, Probabilistic bisimulations for quantum processes, \textit{Information and Computation}
205(2007)1608-1639.

\bibitem{FDY11} Y. Feng, R. Y. Duan and M. S. Ying, Bisimulation for quantum
processes, \textit{Proceedings of the 38th ACM SIGPLAN-SIGACT
Symposium on Principles of Programming Languages (POPL)}, 2011, pp.
523-534.

\bibitem{GN05} S. J. Gay and R. Nagarajan, Communicating Quantum Processes,
\textit{Proceedings of the 32nd ACM Symposium on Principles of
Programming Languages (POPL)}, 2005, pp. 145-157.

\bibitem{GN06} S. J. Gay and R. Nagarajan, Types and typechecking for communicating quantum processes, \textit{Mathematical
Structures in Computer Science} 16(2006)375Ð406.

\bibitem{GPN08} S. J. Gay, N. Papanikolaou and R. Nagarajan, QMC: a model checker for quantum systems.
\textit{Proceedings of the 20th International Conference on Computer Aided Verification (CAV)}, 2008,
Springer LNCS 5123, pp. 543-547.

\bibitem{GPN10} S. J. Gay, N. Papanikolaou and R. Nagarajan, Specification and verification of quantum protocols, \textit{Semantic Techniques in Quantum Computation} (S. J. Gay and I. Mackie, eds.), Cambridge University Press, 2010, pp. 414-472.

\bibitem{HSP83} S. Hart, M. Sharir and A. Pnueli, Termination of
probabilistic concurrent programs, \textit{ACM Transactions on
Programming Languages and Systems} 5(1983)356-380.

\bibitem{JL04} P. Jorrand and M. Lalire, Toward a quantum process
algebra, \textit{Proceedings of the First ACM Conference on
Computing Frontiers}, 2004, pp. 111-119.

\bibitem{La06} M. Lalire, Relations among quantum processes: bisimilarity and congruence, \textit{Mathematical Structures in Computer Science} 16(2006)407-428.

\bibitem{LJ04a}  M. Lalire and  P. Jorrand, A process algebraic
approach to concurrent and distributed quantum computation:
operational semantics, \textit{Proceedings of the 2nd
International Workshop on Quantum Programming Languages}, 2004.

\bibitem{LYY11}Y. Y. Li, N. K. Yu and M. S. Ying,
Termination of nondeterministic quantum programs, Short presentation of \textit{LICS}'2012 (For full paper, see \textit{arXiv}: 1201.0891).

\bibitem{NC00} M. A. Nielsen and I. L. Chuang, \textit{Quantum Computation and Quantum
Information}, Cambridge University Press, Cambridge, 2000.

\bibitem{Se04} P. Selinger, Towards a quantum programming language, \textit{Mathematical
Structure in Computer Science} 14(2004)527-586.

\bibitem{SPH84} M. Sharir, A. Pnueli and S. Hart, Verification of probabilistic programs, \textit{SIAM Journal on Computing} 13(1984)292-314.

\bibitem{Yi11} M. S. Ying, Floyd-Hoare logic for quantum programs, \textit{ACM Transactions on Programming Languages and Systems} 33(2011) art. no: 19.

\bibitem{YF09} M. S. Ying and Y. Feng, An algebraic language for distributed
quantum computing,
\textit{IEEE Transactions on Computers} 58(2009)728-743.

\bibitem{FD09} M. S. Ying, Y. Feng, R. Y. Duan and Z. F. Ji, An algebra of quantum processes, \textit{ACM Transactions on Computational Logic} 10(2009) art. no. 19.

\bibitem{YF10} M. S. Ying and Y. Feng, Quantum loop programs, \textit{Acta Informatica} 47(2010)221-250.

\bibitem{YYFD11} M. S. Ying, N. K. Yu, Y. Feng and R. Y. Duan, Verification of Quantum Programs, \textit{arXiv}:1106.4063.

\end{thebibliography}
\end{document}